\newcommand{\nc}{\newcommand}
\nc{\heading}[1]{\begin{center} \large \bf #1 \end{center}}
\newcommand{\oA}{\overline{A}}
\newcommand{\oB}{\overline{B}}
\newcommand{\oI}{\overline{I}}
\newcommand{\oR}{\overline{R}}
\newcommand{\oS}{\overline{S}}
\newcommand{\Ex}{\mathsf{E}}
\newcommand{\horizline}{\vspace{2pt} \noindent \rule{\textwidth}{0.5pt}}
\theoremstyle{plain}
\newtheorem{prop}{Proposition}
\newtheorem{definition}{Definition}
\newglossaryentry{latex}
{
    name=latex,
    description={Is a mark up language specially suited
    for scientific documents}
}
\newglossaryentry{maths}
{
    name=mathematics,
    description={Mathematics is what mathematicians do}
}
\begin{document}

\date{May 28, 2020}
\title{Stochastic Modeling of an Infectious Disease \\
 Part I: Understand the Negative Binomial Distribution\\
 and Predict an Epidemic More Reliably}
\author{Hisashi Kobayashi\footnote{The Sherman Fairchild University Professor of Electrical Engineering and Computer Science, Emeritus.~~
Email: hisashi@princeton.edu, Blog: \url{http://hp.hisashikobayashi.com}, Wikipedia: \url{https://en.wikipedia.org/wiki/Hisashi_Kobayashi}
 }\\
   Dept. of  Electrical Engineering \\
   Princeton University \\
   Princeton, NJ 08544, U.S.A.}

\maketitle

\begin{center}\textbf{Summery}\end{center}

Why are the epidemic patterns of COVID-19 so different among different cities or countries which are similar in their populations, medical infrastructures, and people's behavior?  Why are forecasts or predictions made by so-called experts often grossly wrong, concerning the numbers of people who get infected or die? 

The purpose of this study is to better understand the stochastic nature of an epidemic disease such as COVID-19, and answer the above questions.  The author hopes that this article will provoke discussions among the ``modeling communities" and stimulate mathematically inclined people to study this interesting and important field, i.e., mathematical epidemiology.

Much of the work on infectious diseases has been based on ``SIR deterministic models," pioneered by Kermack and McKendrick in 1927.  In our study we will explore several stochastic models that can capture the essence of the seemingly erratic behavior of an infectious disease, which the deterministic model cannot explain.  A stochastic model, in its formulation,  takes into account the random nature of an infectious disease.  Thus, such a model, if properly constructed, should be able to provide a more reliable and informative forecast of an epidemic pattern.  

The stochastic model we study in this article is based on the \textbf{\emph{birth-and-death process with immigration}}(BDI for short), which was originally proposed in the study of population growth or extinction of some biological species. To the best of this author's knowledge, the BDI process model has not been investigated by the epidemiology community, perhaps for the reason we briefly discuss in Section 2. 

The general birth-and-death (BD) process usually defies an attempt to obtain a closed solution for the time-dependent (i.e., transient) probability distribution of the population size, etc.  The BDI process, however, is among a small number of BD processes, which we can solve analytically.  An important feature of the BDI process is that its probability distribution function is a generalized \textbf{\emph{negative binomial distribution}} (NBD), with its parameter $r$ being less than one.  We show that the ``coefficient of variation" (the standard deviation normalized by the mean) of the BDI process is larger than $r^{-1}>1$. Furthermore, a NBD with small $r$ has a long tail in its distribution form, like the zeta distribution (aka Zipf's law).  These properties of the infection process explain why 
actual infection patterns exhibit enormously large variations.  Furthermore, the mean value of the number infected provided by a deterministic model is far from the median of the distribution.  This explains why any forecast or prediction based on a deterministic model will fail more often than not.

In Part II of this report \cite{kobayashi:2020b}, we will present results of our extensive simulation study and further analysis of the stochastic model based on the BDI process.   

\paragraph{\em Keywords:}
Infectious disease, COVID-19, Forecast and prediction, Stochastic model, Deterministic model, Kermack-McKendrick's SIR model, Basic and effective reproduction numbers, Birth-and-death process with immigration (BDI), Probability generating function (PGF), Partial differential equation (PDE),   Negative binomial distribution (NBD), Coefficient of variation (CV),  Compound Poisson process, Fisher's logarithmic distribution.

\tableofcontents

\section{Introduction}
Most of the mathematical models of infectious diseases seem to be based on the Kermack-McKendrick model published in 1927~\cite{kermack-mckendrick:1927}, which was proposed to explain the rapid rise and fall in the number of infected population observed in epidemics such as the great plague in London where more than 15\% of the population died  (1665-66);  and the cholera outbreak in London caused by contamination in the Thames River (1865), and the plague epidemic in Bombay (1906)~\cite{bacaer:2011}.  The model consists of a system of three coupled nonlinear ordinary differential equations for the infected population $I(t)$, the susceptible population  $S(t)$, and the recovered  population $R(t)$.  Kermack-McKendrick's \textbf{SIR model} is a \textbf{deterministic model}, which provides the \textbf{expected values} of these processes, which we denote as $\oS(t)(=\Ex[S(t)]$, $\oI(t)(=\Ex[I(t)]$ and $\oR(t)(=\Ex[R(t)])$. A majority of biological and epidemiological models~\cite{anderson-may:1979,anderson-may:1991,martcheva:2010} fall in this class of deterministic models. 

Actual observed data of the infected population, for instance,  is merely an instance or \textbf{a sample path} of this stochastic process $I(t)$. The process naturally deviates from the expected value $\oI(t)$ obtained by a deterministic model. Thus, a deterministic model alone fails to provide any quantitative explanation when observed data  differ significantly from the expected value.  

In a \textbf{stochastic} (or \textbf{probabilistic}) model, on the other hand, the intrinsic stochastic nature of a process is explicitly taken into account in its model formulation. 
The importance of stochastic processes in relation to problems of population growth was pointed out by W. Feller in 1939 \cite{feller:1939}.  He considered the \textbf{birth-and-death} process in which the expected birth and death rates (per person per unit time) were constants, say, $\lambda$ and $\mu$. D. G. Kendall \cite{kendall:1948a} extended Feller's birth-and-death (BD) process  by considering the birth and death rates as any specified functions of the time $t$, $\lambda(t)$ and $\mu(t)$. The BD process is a special class of \emph{time-continuous discrete-state Markov process}, and has found applications in many scientific and engineering fields, including population biology \cite{feller:1968}, teletraffic and queueing theory \cite{syski:1986}, \cite{kleinrock:1975}, \cite{kelly:1979}, system modeling \cite{kobayashi:1978}\cite{kobayashi-mark:2008}, pp. 63-94, \cite{kobayashi-mark-turin:2012}, pp. 407-410. 

I have done some investigation, with help from Prof. Hideaki Takagi, whose unpublished lecture note at the University of Tsukuba \cite{takagi:2007} provided me with several references, as to who coined the term ``birth-and-death process with immigration," and have found the English statistician M.S. Bartlett (1910-2002) in his famous book ``An Introduction to Stochastic Processes" \cite{bartlett:1978} (1st edition in 1955) in his discussion of the birth-and-death process in ``Section 3.4 Multiplicative chain: subsection 3.4.1 ``The effect of immigration,''that he uses the phrase "a birth-death-and-immigration process."  But Bartlett's doctoral student, David G. Kendall (1918-2007) gives a detailed analysis of the BDI process in his 1949 article \cite{kendall:1948b}, which Bartlett refers to in his 1949 article \cite{bartlett:1949}. So my tentative conclusion was  that Kendall was the first that worked on the BDI although he did not use the term ``birth-death-immigration" or something to that effect.
 
Another English statistician, Norman T. J.  Bailey published in 1964 ``The Elements of Stochastic Process with Applications to the Natural Sciences," \cite{bailey:1964}, and  in ``Section 8.7: The effect of  immigration" (pp. 97-101), he gives a thorough treatment of the BDI process.  Linda J. S. Allen, ``An Introduction to Stochastic Processes with Applications to Biology," (2nd Edition, 2011) \cite{allen:2011} discusses the BDI process in Section 6.4.4: Simple Birth and Death with Immigration (pp. 254-258), but her focus  seems to be more on the stable case.   Frank P. Kelly \cite{kelly:1979} gives a brief treatment, providing the steady-state distribution.  
All other numerous textbooks on random processes make no mention of the BDI process, and even the above handful of authors who might have had epidemiologists in mind among their readership seem to treat the BDI process for its possible application to population biology, and none allude to its use in epidemiology.
   
\section{A Brief Review of SIR Deterministic Model}  \label{sec-SIR}

As stated earlier, a majority of mathematical models reported in the literature on infectious diseases  have been deterministic models, following the pioneering work of almost a century ago by W.G. Kermack and A.G. McKendrick \cite{kermack-mckendrick:1927} \footnote{William Ogilvy Kermack (1898-1970) was a Scottish biochemist and Anderson Gray McKendrick (1876-1943) was a Scottish military physician and epidemiologist.}.  They assumed 
\begin{enumerate}
\item At any time $t$, an individual is either susceptible (\emph{S}), infected and infectious (\emph{I}) or recovered and immune (\emph{R}).
\item Only susceptible individuals can get infected, remain infectious for some time, and recover and become completely immune.
\item There are no births, deaths, immigration or emigration during the study period.  In other words, the community is \emph{closed}.
\end{enumerate}

Consequently, individuals can only make two types of transitions:  (i) from \emph{S} to \emph{I}, and  (ii) from \emph{I} to \emph{R}. Thus, the Kermack-Mckendrick model is often referred to as an \emph{SIR} epidemic model.    A model which assumes no immunity (i.e., a recovered person becomes immediately susceptible) is called an \emph{SIS} model.  
If we explicitly consider an \emph{exposed} state, during which an infected individual is not yet infectious, the model is called an \emph{SEIR} model.  A model in which immunity wanes after some period is called an \emph{SIRS} model, and so forth.  In the remainder of this section we give a brief account of the SIR model in a closed community so that the reader can compare this deterministic model to our stochastic model to be presented in the next section.  For details of the Kermack-McKedrick type deterministic models,  the readers are referred to abundant books and articles;  Anderson and May, \cite{anderson-may:1991}, Martcheva \cite{martcheva:2010} to name just a few.

Let $\oS(t)$, $\oI(t)$ and $\oR(t)$, respectively\footnote{We adopt this notation to distinguish them from $S(t)$, $I(t)$ and $R(t)$ which are stochastic processes, as used in other parts of this article. The time functions used in deterministic models usually correspond to the expectation or stochastic mean of the corresponding stochastic processes.}, denote the number of the susceptible, infected and recovered at time $t$.  Since we assume no births, deaths, immigration nor emigration, we have
\begin{align}
\oS(t)+ \oI(t) + \oR(t)= N,~~\mbox{for all}~~t\geq 0,\label{N-const}
\end{align}
where $N$ is a constant number, representing the population of the community.

From the set of assumptions stated above, the deterministic processes can be defined by the following set of three differential equations:
\begin{align}
\frac{d\oS(t)}{dt}&=-\beta \oS(t)\oI(t),\label{KM-1}\\
\frac{d\oI(t)}{dt}&=\beta \oS(t)\oI(t)-\mu\oI(t),  \label{KM-2}\\
\frac{dR(t)}{dt}&=\mu\oI(t). \label{KM-3}
\end{align}
These differential equations, together with (\ref{N-const}) and the initial condition
\begin{align}
\oI(0)=I_0, ~~\mbox{and}~~\oR (0)=0,   \label{initial-cond}
\end{align} 
define the deterministic model.  It is easy to see that $\oS(t)$ is monotone decreasing, and $\oR(t)$ is monotone increasing.  The function $\oI(t)$ increases or decreases at time $t$, depending on whether the ratio $R_t$ defined by  the following expression is greater or smaller than unity. 
\begin{align}
{\cal R}_t=\frac{\beta\oS(t)}{\mu}\leq \frac{\beta\oS(0)}{\mu}={\cal R}_0  \label{def-R_0-in-KM}
\end{align}

Its initial value ${\cal R}_0$ is referred to as the \textbf{\emph{basic reproduction number}}, a term having its origin in demography.  The ratio $R_t$ is called the \textbf{\emph{effective reproduction number.}} or the \textbf{\emph{real-time reproduction number}} and is more meaningful than ${\cal R}_0$, which is a static number, in estimating the current epidemic situation and making a decision to control the epidemic.

The term $\beta\oS(t)\oI(t)$ in (\ref{KM-1}) and (\ref{KM-2}) comes from the argument that the susceptible must have contact with the infected in order to get infected, and if we assume some sort of \emph{uniform mixing},  the infections should occur at a rate proportional to $\oS(t)\oI(t)$.  Consequently, the unit of the parameter $\beta$ is [person/unit-time/person$\cdot$person], whereas the other parameter \footnote{Another Greek letter $\gamma$ is often used for the recovery rate, but we use $\mu$ to be consistent with our notation in the next section.} $\mu$ has the unit of [person/unit-time/person].

When the total population $N$  of (\ref{N-const}) is sufficiently large, $S(t)\approx N$ is much larger than $I(t)$ and can be treated as unchanged,  at least in the initial phase of an epidemic.  Then by setting
\begin{align}
\lambda \approx\beta S(t)\approx \beta S(0) ~[\mbox{person/unit-time}\cdot\mbox{person}] , \label{lambda-beta}
\end{align}
and by substituting this into (\ref{KM-2}), we have the following ordinary differential equation (ODE):
\begin{align}
\frac{d\oI(t)}{dt}=(\lambda-\mu)\oI(t),  \label{ODE-I(t)}
\end{align}
from which and the initial condition (\ref{initial-cond}), we readily find the solution for $\oI(t)$:
\begin{align}
\oI(t)=I_0 e^{at},~~t\geq 0,~~\mbox{where}~~a=\lambda-\mu.  \label{KM-I(t)}
\end{align}
which is an exponentially growing or decaying function, depending on whether $a>0$ or $a<0$.  When $a=0$, it is a constant $I_0$ for all $t\geq 0$.

When the infected $I(t)$ grows to the extent that the approximation (\ref{lambda-beta}) no longer holds, i.e., the ``infinite population" assumption fails, we have to deal with the nonlinear differential equations of (\ref{KM-1}) and (\ref{KM-2}).  A major advantage of the SIR model is that because of the product term $S(t)I(t)$, the differential equations take into account explicitly the fact that occurrences of infections will gradually decreases towards to zero as the susceptible population becomes extinct towards the end of the infection processes. The main drawback of the SIR model, on the other hand, is its inability to capture any probabilistic fluctuation of the infection process. The SIR model
may be an appropriate model in describing the interactions between the susceptible group and infected group in a closed environment, such as a hospital, retirement home, cruise ship, night club, etc.  But it is a poor model in describing a major outbreak of an epidemic in a larger environment such as a city or a country, where most infections take place independently and randomly, and the product term $\beta\oS(t)\oI(t)$ does not have any significant meaning.  Furthermore, the product term makes the entire SIR model a nonlinear system, and makes the system mathematically intractable, except for a few simple cases, which may not be useful in reality.

As we will show in this article and Part II \cite{kobayashi:2020b}, the deterministic model, in addition to being unable to describe the stochastic fluctuation of an epidemic pattern, is more likely to grossly overestimate the number of casualties. Thus, the deterministic model is not only limited in its applicability, but can be damaging and harmful in some cases.

\section{A Stochastic Model for an Infectious Disease} \label{sec-stochastic_model}
In this and following sections we will discuss our stochastic model based on the \textbf{\emph{birth-and-beath with immigration}} (BDI) process.  It is a special case of general birth-and-death (BD) process\footnote{William (Willy) Feller introduced what is now known as the ``birth-and-death'' process in his 1939 article \cite{feller:1939} published in German  regarding the problem of population growth.  He used the term ``Vermehrung" (reproduction) and ``Tod'' (death). David. G. Kendall in his 1948 paper \cite{kendall:1948a} referred to Feller's model as a ``birth-and-death" process. Feller used this term in his Volume I \cite{feller:1968}, whose first edition was in 1950.}.  Before we present a detailed mathematical analysis of this model, we will show an example of our simulation model based on the BDI
process.
\begin{figure}[ht]
  \begin{minipage}[b]{0.45\textwidth}
  \centering
  \includegraphics[width=\textwidth]{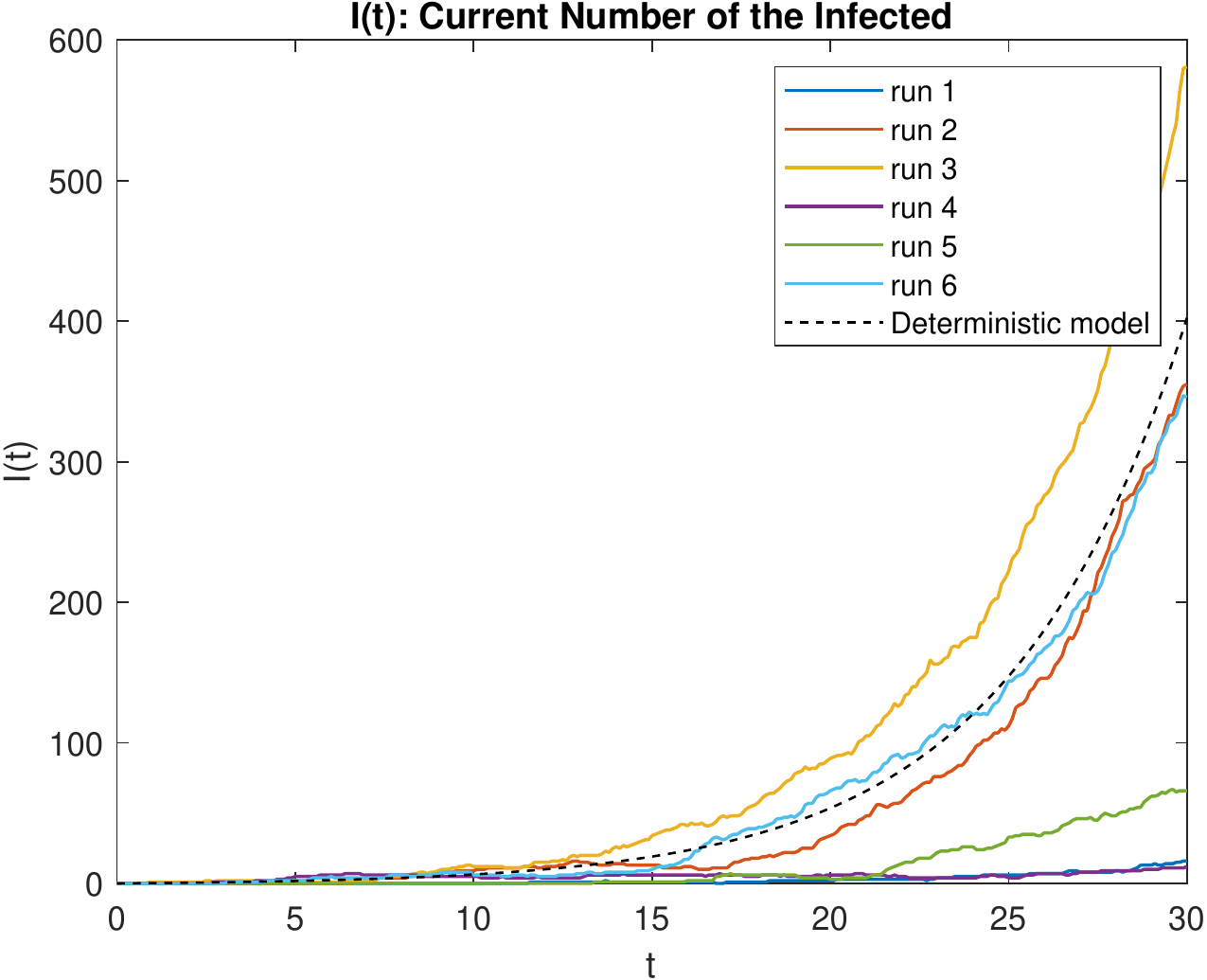}
  \caption{The first 6 simulation runs of BDI process and a deterministic model curve.}
  \label{fig:simu_1-6}
  \end{minipage}
  \qquad\qquad
  \begin{minipage}[b]{0.45\textwidth}
  \centering
  \includegraphics[width=1\textwidth]{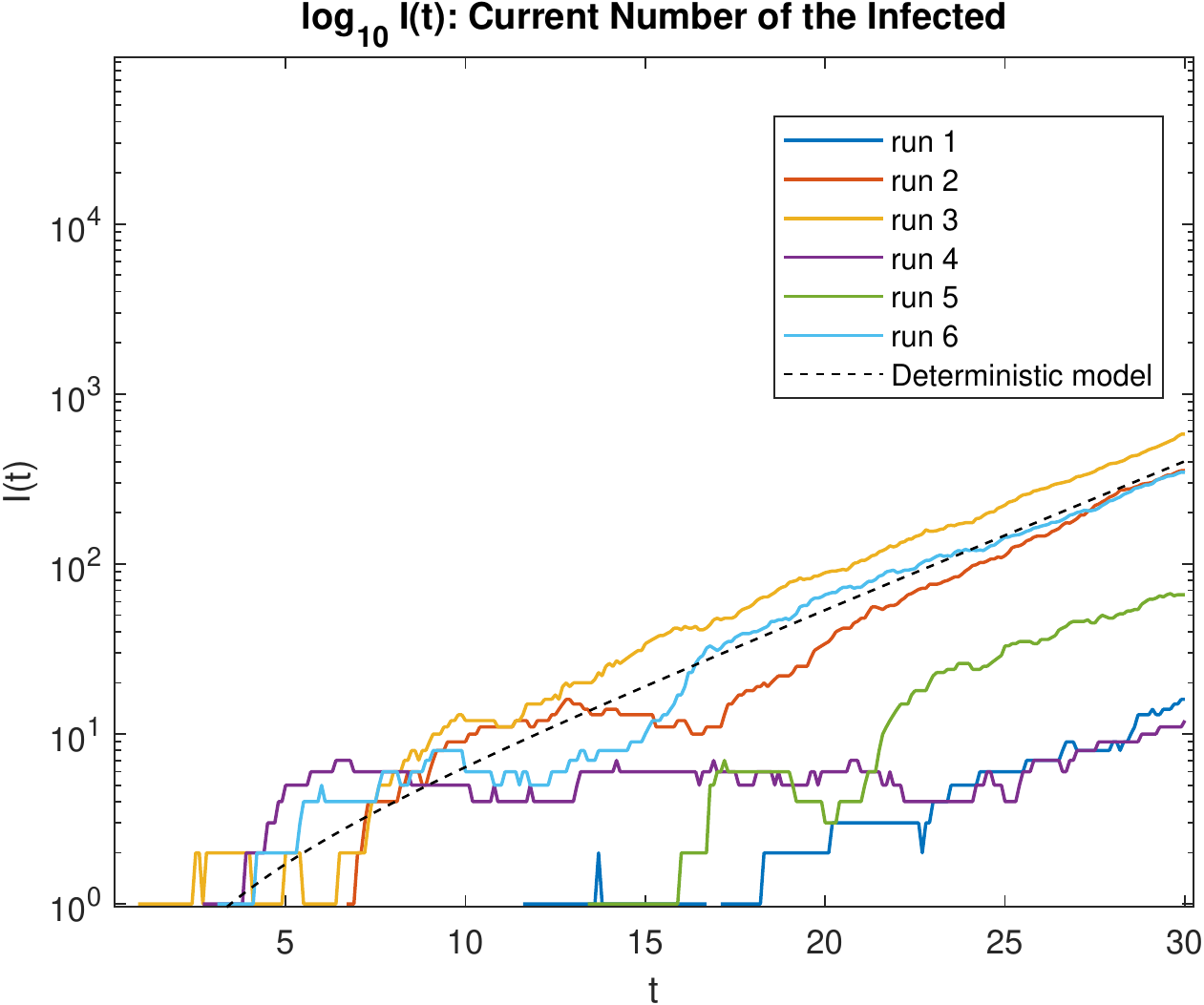}
  \caption{Semi-log plots of the same runs. The baseline of $y$ axis is $10^0=1$.} 
  \label{fig:Semilog_simu_1-6}
  \end{minipage}
\vskip 1em
  \begin{minipage}[b]{0.45\textwidth}
  \centering
  \includegraphics[width=1\textwidth]{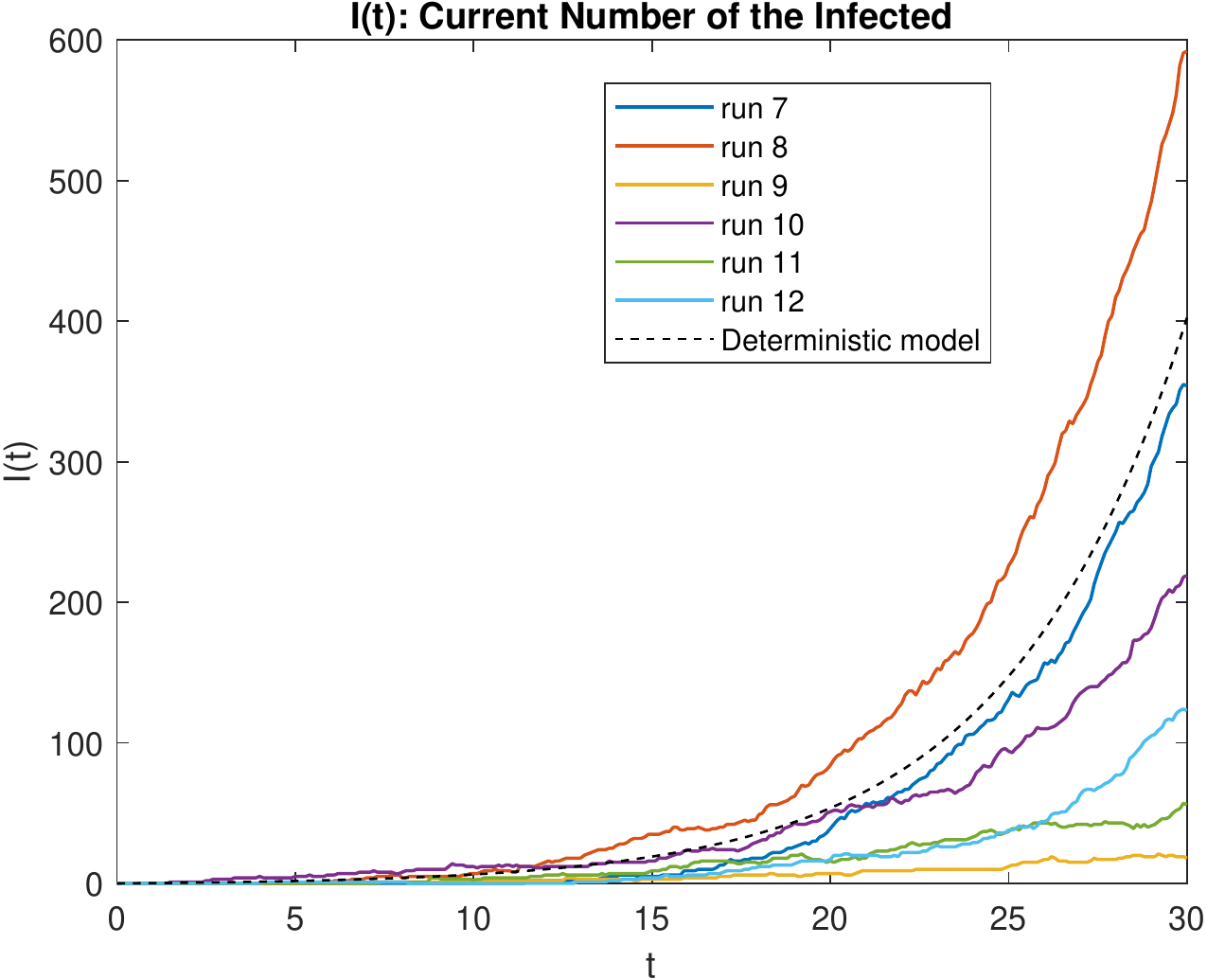}
  \caption{The next 6 simulation runs of BDI process and a deterministic model curve.}
  \label{fig:simu_7-12}
  \end{minipage}
  \qquad \qquad
  \begin{minipage}[b]{0.45\textwidth}
  \centering
  \includegraphics[width=1\textwidth]{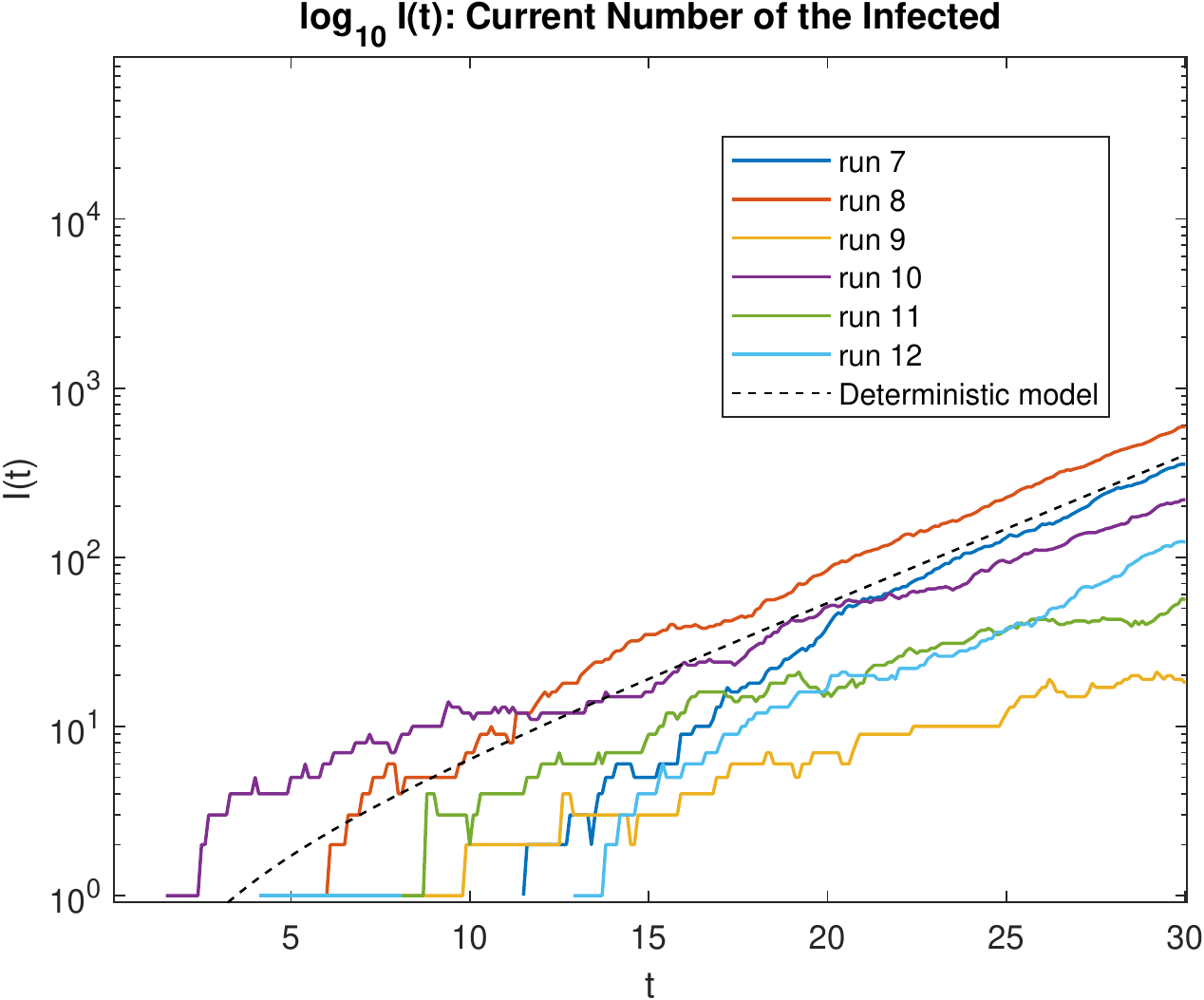}
  \caption{Semi-log plots of the same runs. The baseline of $y$ axis is $10^0=1$.} 
  \label{fig:Semilog_simu_7-12}
  \end{minipage}
\end{figure}
Figure \ref{fig:simu_1-6} shows the first 6 out of a total of 12 simulation runs consecutively done in one execution of our simulation program in a MATLAB script.  Our simulator is based on the \emph{event scheduling approach}\footnote{The time-asynchronous event scheduling approach is a more time efficient and accurate simulation method than a time-synchronous approach. See  e.g., \cite{kobayashi:1978} pp. 230-234, or \cite{kobayashi-mark:2008}, pp. 626-630.}, where events are arrivals of infected persons from the outside (at rate $\nu$ persons/day), occurrences of secondary infections within the community (at rate $\lambda$ infections/day/infectious person), and recovery/deaths of the infected persons (at rate $\mu$ recovery/death/day/infected person).  The parameter values of $\lambda=0.3, \mu=0.1$ and $\nu=0.2$ are assumed.  Figure \ref{fig:Semilog_simu_1-6} plots the same set of curves in a semilog scale. The exponential growth curves are shown as straight-lines; the initial part of the simulation is more clearly shown in the semilog scale.  
Figures \ref{fig:simu_7-12} and \ref{fig:Semilog_simu_7-12} are the plots of the remaining six runs.

There are at least two questions concerning these simulation runs.
\begin{enumerate}
\item Why are the variations among different simulation runs so large?
\item There are more runs whose plots are below the deterministic model curve. Does the deterministic model tend to overestimate the size of the infected population?  If so, why?
\end{enumerate}

The analysis of the BDI process in the following sections should be able to answer these questions.  

\subsection{Formulation for the time-dependent solution for the stochastic model} \label{subsec-Model-Formulation}
Let $I(t)$ represent the number of infected persons at time $t$, and $P_n(t)$ be the time-dependent (or transient) probability mass function (PMF)\footnote{We could use perhaps a more common term the \emph{probability distribution function} but PMF is more explicit that we are dealing with a discrete distribution, not a continuous distribution.} of the process $I(t)$, i.e., 
\begin{align}
P_n(t)=\mbox{Pr}[I(t)=n], ~~n=0, 1, 2, \cdots, ~~\mbox{and}~~t\geq 0.
\label{eq:P_n}
\end{align}

We assume that each infected person is infectious, and infects susceptible persons at rate $\lambda$ [persons/unit-time/person], where the time unit can be arbitrary, e.g., a second, an hour, a day, etc.  Let us assume that an infected person recovers, gets removed or dies at rate $\mu$/[unit-time]. Thus, $\mu^{-1}$~[unit-time] is the \emph{mean infectious period}. The ratio $\lambda/\mu$ is equal to the \emph{basic reproduction number}, i.e., the mean number of infections caused by an infected person during the infectious period. 

We can formulate an infectious disease as a birth-and-death (BD) process, by defining the birth and death rates both of which are simple linear functions of the state $n$ of the process $I(t)$:
\begin{align} 
\lambda_n&=n\lambda + \nu,\nonumber\\
\mu_n&=n\mu,  \label{state-dependent-BD}
\end{align}
A few remarks are in order.  This particular state-dependent BD process is also known as the ``birth-death-immigration (BDI)" process~(see e.g.,\cite{kelly:1979}, p. 14 for the steady state distribution), in which the \textbf{parameters $\lambda$, $\mu$ and $\nu$} represent the \emph{birth} (i.e., secondary infection), \emph{death} (i.e., recovery or death) and \emph{immigration} (i.e., arrival of an infected individual from outside) \emph{rates}, respectively.

A few remarks are in order.
\begin{enumerate}
\item In actuality, ``recovery," ``removal" and ``death" are distinctly different matters. In analyzing the infection process, however,  these three sources of loss from the susceptible or infected population, are mathematically equivalent in the sense they will not contribute to the infection process in the future.  We assume here that those who have recovered from the disease have acquired immunity and will not be susceptible nor infectious.

The assumption that each infected individual recovers (or is removed or dies) at rate $\mu$ is equivalent to assuming that the duration $S$ that each sick person remains infectious is exponentially distributed with mean $1/\mu$, that is:
\begin{align}
\mbox{Pr}[S\leq s]=1-e^{-\mu s},~~s\geq 0.\label{expo-dist}
\end{align}

\item It can be shown mathematically that many of our results to be obtained in terms of the probability mass function (PMF) of $I(t)$, and other related quantities are {\it insensitive} to the actual distribution of $S$ .  All that matters is that we set $\mu=\overline{S}^{-1}$, where $\overline{S}=\Ex[S]$.  

\item In the present paper, we assume that the population is homogeneous, and the susceptible population size remains sufficiently large, thus mathematically treated as ``infinite."  Furthermore, the parameters $\lambda, \mu$ and $\nu$ are assumed to be constant.  Many of our results can be extended to the case of multiple types of populations (e.g., clustering of infections): a model with the susceptible population decreases as some of them get infected; and the case where the model parameters' values change (e.g., the situation where the infectious rate $\lambda$ may change as people's behavior changes), and these generalized models will be discussed in subsequent reports.
\end{enumerate}

We can show (see e.g.,\cite{kobayashi-mark-turin:2012}, pp.407-408) that the PMF (\ref{eq:P_n}) of this BD process should satisfy the following set of linear differential-difference equations, a.k.a. {\it Kolmogorov's forward equation}:
\begin{align}
\frac{dP_0(t)}{dt}&=-\nu P_0(t)+\mu P_1(t)\nonumber\\
\frac{dP_n(t)}{dt}&=\left((n-1)\lambda + \nu\right)P_{n-1}(t)-\left(n(\lambda+\mu)+\nu\right)P_n(t)+(n+1)\mu P_{n+1}(t),~~ n=1, 2, 3 \cdots \label{diff_equation_for_I(t)}
\end{align}
with the initial condition
\begin{align}
I(0)=I_0, ~\mbox{i.e.}, P_n(0)=\delta_{n, I_0},~~n=0, 1, 2, \cdots,
\end{align}
where $\delta_{m,n}$ is Kronecker's delta.  

We transform the above set of infinitely many equations (\ref{diff_equation_for_I(t)}) into a single equation by using  the \textbf{\it probability generating function}~(PGF) (see e.g.,\cite{kobayashi-mark-turin:2012}, p. 402) defined by
\begin{align}
G(z,t)=\Ex[z^{I(t)}]=\sum_{n=0}^\infty z^nP_n(t). \label{def-PGF}
\end{align}
Multiply the set of equations (\ref{diff_equation_for_I(t)}) by $z^n$ and sum them from $n=0$ to $\infty$, obtaining the following partial differential equation:
\begin{equation}
\fbox{
\begin{minipage}{8.5cm}
\[
   \frac{\partial G(z,t)}{\partial t}
=(z-1)\left[(\lambda z -\mu)\frac{\partial G(z,t)}{\partial z}+\nu G(z,t)\right] ,
   \]
\end{minipage}
} \label{PDE-for-PGF}
\end{equation}
with the boundary condition
\begin{align}
G(z,0)=z^{I_0}. \label{initial}
\end{align}

\subsection{Stochastic means of the infected process \emph{I(t)} and related processes} \label{subsec-Expectation}

Although the process $I(t)$ is the main focus of our analysis,  it will be worthwhile to  introduce  related processes and our assumptions.
\begin{definition} \label{def-A(t)-etc}
\noindent
\begin{enumerate}

\item The process $A(t)$ is the cumulative count of external arrivals of infectious individuals from the outside. We assume that $A(t)$ is a Poisson process with rate $\nu$ [persons/unit time],

\item The process $B(t)$  is the cumulative count of internally infected individuals.  We assume that the birth of such persons occurs at the rate of $\lambda$ [persons/unit time/infectious person].  

\item The process $R(t)$ is the cumulative count of recovered/removed or dead individuals.  We assume that the departure of such persons occurs at the rate of $\mu$ [persons/unit time/infected person].  Note that all infected  persons are infectious persons until their recovery/removal/death.

\item The process $I(t)$ is the present number of infected persons, i.e.,
\begin{align}
I(t)=I(0)+A(t)+B(t)-R(t).  \label{I-A-B-D}
\end{align}
$\Box$
\end{enumerate}
\end{definition}

The expectation and variance of the above processes will be of our interest, which we denote by
\begin{align}
\oA(t)&=\Ex[A(t)],~~\oB(t)=\Ex[B(t)],~~\oR(t)=\Ex[R(t)],~~\mbox{and}~~\oI(t)=\Ex[I(t)], \label{Mean-A-etc}\\
\sigma^2_A(t)&=E[(A(t)-\oA(t))^2],~~ \mbox{etc.}  \label{Variance-A-etc}
\end{align}

Before we discuss how to find the PGF $G(z,t)$ from the PDE (\ref{PDE-for-PGF}), let us derive first an ordinary differential equation for $\oI(t)$. 
By dividing both sides of (\ref{PDE-for-PGF}) by $(z-1)G(z,t)$, we will have 
\begin{align}
(z-1)^{-1}\frac{\partial\ln G(z,t)}{\partial t}&=
(\lambda z -\mu)\frac{\partial \ln G(z,t)}{\partial z}+\nu.  \label{partial_lnG}
\end{align}
By setting $z=1$, we find\footnote{Alternatively, we can obtain this differential equation  directly, by multiplying each equation in
(\ref{diff_equation_for_I(t)}) by $n$ and summing them up from $n=0$ to infinity.}
\begin{align}
\frac{d\oI(t)}{dt}=  (\lambda-\mu)\oI(t) + \nu. \label{diff_for_I}
\end{align}
where, on the LHS\footnote{The abbreviations LHS and RHS mean the left-hand side and right-hand side, respectively.}, we first set $z=1$ (which corresponds to differentiation at $z=1$), and use L'H\^{o}pital's rule, obtaining \footnote{Here we use an important property of PGF, i.e, $\frac{\partial G(z,t)}{\partial t}=\Ex[I(t)z^{I(t)-1}], $ and by setting $z=1$, the RHS becomes $\Ex[I(t)]$. We changed the order of differentiation w.r.r. to $z$ and $t$, which can be justified because the function $G(z,t)$ is an analytic function, i.e., it is continuous and differentiable everywhere.} $\Ex[I(t)]=\oI(t)$. 
The ordinary differential equation (\ref{diff_for_I}) can be solved, yielding

\begin{align}
\oI(t)&=I_0e^{a t}+\frac{\nu}{a}\left( e^{a t}-1 \right),~~t\geq 0,~~\mbox{where}~~I_0=I(0)~~ \mbox{and}~~a=\lambda-\mu.
  \label{mean-I(t)}
\end{align}
If the model parameters are set to new values, say, to $\lambda', \mu'$ and $\nu'$ at some point $t=t_1\geq 0$, then the solution $\oI(t)$ for $t\geq t_1$ is given by
\begin{align}
\oI(t)&=I_1 e^{a'(t-t_1)}+ \frac{\nu'}{a'}\left(e^{a'(t-t_1)}-1\right),~~~~t\geq t_1,~~~\mbox{where}~~I_1=\oI(t_1),~~a'=\lambda'-\mu' .  \label{I(t)-after-t_1}
\end{align}
It should be clear that  $\oI(t)$ diverges to infinity, if $a>0$ in (\ref{mean-I(t)}) and converges to $\nu/|a|$ in the limit $t\to\infty$ if $a<0$.  Similarly, in  (\ref{I(t)-after-t_1}), the process converges to  $\nu'/|a'|$, if $a'<0$.

If $a=0$, then\footnote{The second term becomes 0/0, so we apply L'H\^{o}pital's rule.}
\begin{align}
\oI(t)=I_0+\nu t=I_0+A(t)~~t\geq 0,~~\mbox{when}~~a=0. 
\end{align}

In Figure \ref{fig:deterministic} we show the case where $\lambda=0.3, \mu=0.1$ and $\nu=0.2$ and at $t_1=30$, a new parameter $\lambda'=0.06$ is set, whereas the original values of $\mu$ and $\nu$ are retained.
\begin{figure}
\centering
\includegraphics[scale=0.6]{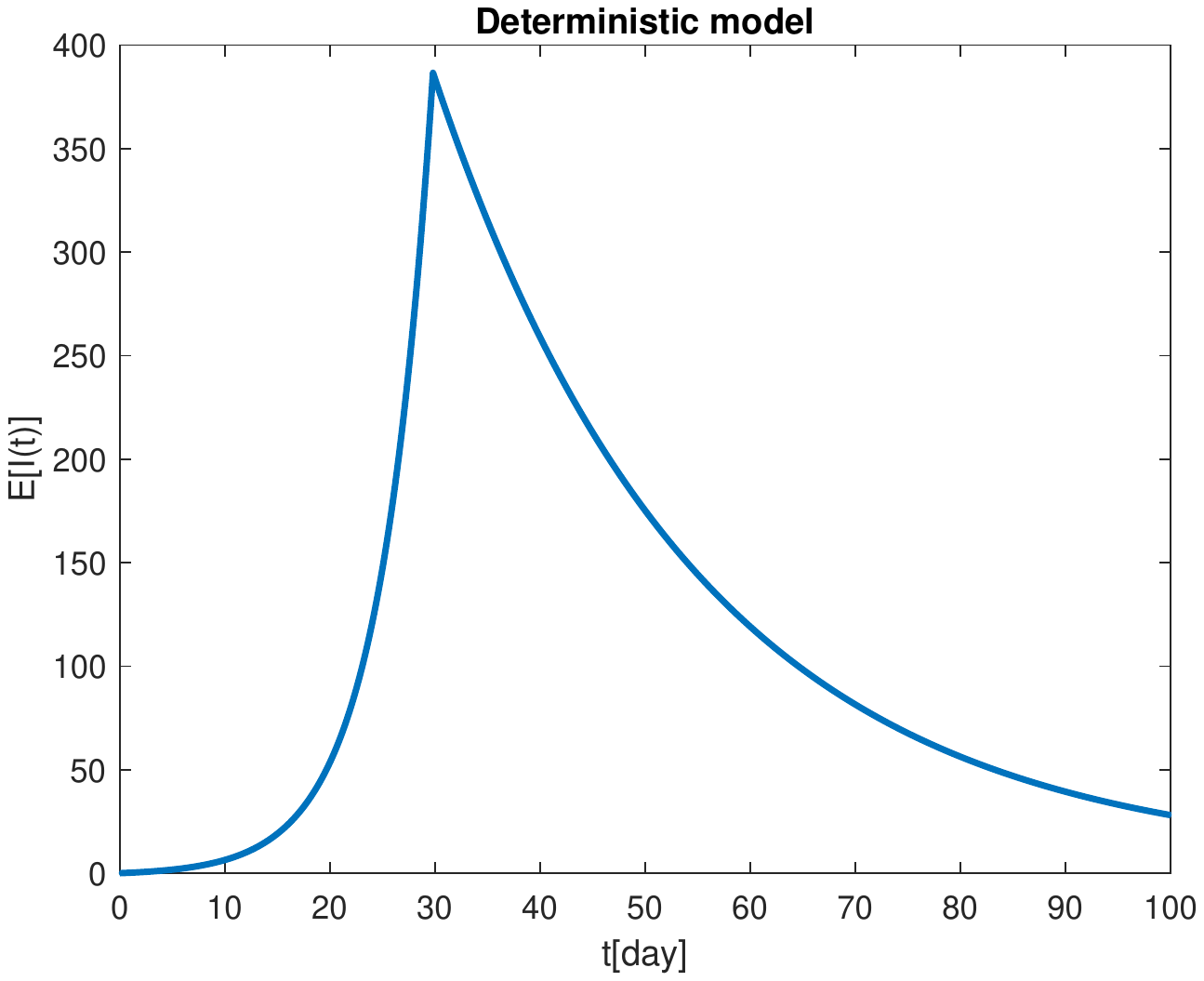}
\caption{$\oI(t)$ when $\lambda=0.3, \mu=0.1, \nu=0.2$ for $0\leq t\leq t_1=30$ at which point $\lambda$ is changed to $\lambda'=0.06$.}
\label{fig:deterministic}
\end{figure}

The mean values of other processes can be easily found.  $A(t)$ is a Poisson process with rate $\nu$, which implies 
\begin{align}
\oA(t)=\nu t,~~\mbox{for all}~~t\geq 0. \label{mean-A(t)}
\end{align}
Since each person in the infected population $I(t)$ infects others at the rate of $\lambda$ persons/unit time, the differential of $\oB(t)$ is given by
\begin{align}
\frac{d\oB(t)}{dt}=\lambda \oI(t). \label{Eq-for-B(t)}
\end{align}

From this and (\ref{mean-I(t)}) we obtain
\begin{align}
\oB(t)&=\lambda\int_0^t \oI(u)\,du =\frac{\lambda I_0 (e^{at}-1)}{a}+\frac{\lambda \nu (e^{at}-1)}{a^2}-\frac{\lambda\nu t}{a} \nonumber\\
&=\frac{\lambda}{a}(\oI(t)-I_0-\oA(t)),~~0\leq t \leq t_1. \label{mean-B(t)}
\end{align}

For $t\geq t_1$, we find
\begin{align}
B(t)&=B(t_1)+\frac{\lambda'I_1(e^{a'(t-t_1)}-1)}{a'}-\frac{\lambda'\nu'(t-t_1)}{a}
+\frac{\lambda'\nu'(e^{a(t-t_1)}-1)}{a^2}\nonumber\\
&=\frac{\lambda'}{a'}\left(I_1e^{a'(t-t_1)}+\frac{\nu'}{a'}( e^{a'(t-t_1)}-1)-I_0-\nu't\right)\nonumber\\
&=\frac{\lambda'}{a'}(I(t)-I_0-\oA'(t)),~~\mbox{where}~~\oA'(t)=\nu't,  \label{B(t)-after-t_1}
\end{align}
which takes the same form as that for $t\leq t_1$. It should be worthwhile to note that both (\ref{mean-B(t)}) and (\ref{B(t)-after-t_1}) could be derived from the mean value of the identity equation (\ref{I-A-B-D}) together with the relation $\oR(t)=\frac{\mu}{\lambda}\oB(t)$, which is evident as shown below. 

The recovery process $\oR(t)$ should satisfy the following differential equation, similar to (\ref{Eq-for-B(t)}):
\begin{align}
\frac{d\oR(t)}{dt}&=\mu \oI(t). \label{Eq-for-R(t)}
\end{align}
Thus, it readily follows:
\begin{align}
\oR(t)&=\left\{\begin{array}{ll}
\mu(\oI(t)-I_0-\oA(t))/a,~~~& 0\leq t\leq t_1\\
\mu'(\oI(t)-I_0-\oA'(t))/a',~~~& t\geq t_1. \end{array}\right.
  \label{mean-R(t)}
\end{align}

The above expressions for $\oI(t)$ and other processes can be viewed as our \textbf{\emph{deterministic (or non-probabilistic)  model}} for the dynamics of the BDI process.  From these simple expressions, we can extract a few important characteristics concerning the mean values of $I(t)$ and others. 
\begin{enumerate}
\item  $a=\lambda-\mu$ determines the exponential growth or decay rate of $\oB(t), \oR(t)$ as well as $\oI(t)$.

\item  $\nu$ is merely a linear scaling factor for $\oI(t)$ and other processes, and so is $I_0$, the initial number of the infected.\footnote{Eq.(\ref{mean-I(t)}) can be rewritten as  $\oI(t)=\left(I_0+\frac{\nu}{a}\right) e^{a t}-\frac{\nu}{a}$. So $\frac{\nu}{a} + I_0 $ is the multiplying coefficient of the exponential term $e^{a t}$.}

\item  If $a>0$, then $\oI(t)$ grows exponentially without bound; if $a<0$, it decays exponentially towards $\nu/|a|$. If $a=0$,  $\oI(t)=I_0+\nu t$, i.e., $\oI(t)$ grows linearly.

\item  The ratio of the infection rate (or reproduction rate)
$\lambda$ to the recovery or removal rate $\mu$
\begin{align}
R_0 = \frac{\lambda}{\mu} \label{reproduction_no}
\end{align}
is  called the \textbf{\emph{basic reproduction number}} in epidemiology (see  e.g., \cite{martcheva:2010}, p. 21). The term was originally defined in the context of a deterministic model called the \textbf{SIS}(\emph{Susceptible-Infected-Susceptible}) epidemic model.   It is the average number of persons whom an infectious person infects 
before his/her recovery, removal, or death. 
The reproduction number determines whether the infection will grow exponentially, die out, or remain constant, depending on whether $R_0>1$, $R_0<1$, or $R_0=1$, respectively.  
The exponential parameter $a$ can be expressed in terms of $R_0$ and $\mu$:
\begin{align}
a=\lambda -\mu =(R_0-1)\mu.  ~~\label{alpha-R}
\end{align}

\item  The amount of time $T$ that takes for $\oI(t)$ or other related quantities to double, and thes exponential growth parameter $a$ are related by
\begin{align}
e^{a T}=2, ~~\mbox{or equivalently}~~a T \approx 0.693.\label{doubling-days}
\end{align}
Note that both the integral and derivatives of the exponential function $e^{a t}$ are also $\propto e^{a t}$.  Thus, the above formula for $T$  equally applies, when cumulative numbers or incremental numbers are to be counted for a given $a$ via an observed $T$.

\item Unless we can expect to increase the value of $\mu$ by improving the medical service or producing an effective vaccine to immunize the susceptible population, the only options we have for controlling an infectious disease is to \textbf{\emph{increase $\mu$}} by removing as many infectious individuals away from susceptible population as possible, and/or to \textbf{\emph{decrease $\lambda$}} by increasing the so-called \textbf{\emph{social distances}} between the susceptible and the infectious. We will provide an illustrative example in Part II \cite{kobayashi:2020b}.
\end{enumerate}

\horizline
\noindent\\
\small{\textbf{Example 1}\label{Example_1-deterministic}
Consider the following community:~~
The external arrival rate of infected individuals $\nu=0.2$ [persons/day], i.e. one such such incidence every 5 days on average. The average number of days required for an infected person to recover, be removed or die, is $\mu^{-1}=10$ [days], i.e., the recovery rate $\mu=0.1$ [per day] for each infected person.  Suppose that the average number of secondary infections caused by an infected individual is estimated as $R_0=\lambda/\mu=3$ for each infectious person.  Then this value and $\mu$ provide an estimate of the infectious rate $\lambda=0.3$ [persons/day/infected person].  Thus, the exponential growth parameter is estimated as $a=\lambda-\mu=0.2>0$.  One can check the validity of the model and/or consistency among the three estimated parameters, by computing a second estimate of the $a$ from the formula (\ref{doubling-days}).  In Part II, we will discuss various ways of estimating the model parameters.

Once we have obtained reliable model parameters, we can predict the expected value of the infected process $\oI(t)$, assuming that there was no infected person at $t=0$, i.e., $I(0)=I_0=0$:   
\begin{align}
\oI(t)=\frac{\nu}{a}\left[e^{a t}- 1~\right]=e^{0.2 t}- 1.\label{ex-determinisitc-curve}
\end{align}
Thus, this curve predicts the expected number of the infected population, excluding those who have recovered, been removed or have died.
\begin{align}
\oI(0)&=0, ~~\oI(5)=1.7,~~\oI(10)=15.0,~~\oI(15)=19.1,~~\oI(20)=53.6,~~\oI(25)=147.4,\nonumber\\
\oI(30)&=402.5,~~\oI(35)=1,095.5,~~\oI(40)=2,980.0,~~\oI(45)=8,102.1,~~\oI(50)=22,025.5.  \nonumber
\end{align} $\Box$

\subsection{Steady-state distribution of the \emph{I(t)}}\label{subsec-steady-state}
So far we have discussed only the mean values of the random process $I(t)$ and other processes.
Before we find the probability mass functions $P_n(t), n=0,1, 2, \cdots$ for any $t$, we obtain in this section the steady-state distribution $\lim_{t\to\infty}P_n(t)=\pi_n$, if it exists.  We know already that when $a>0$, such distribution does not exist. 

Thus, the steady state distribution can possibly exist, only when $a\leq 0.$ In order to find it,  we set the LHS of the PDE (\ref{PDE-for-PGF}) equal to zero, obtaining the following ordinary differential equation for the PGF $G(z, \infty)$:
\begin{align}
(\lambda z-\mu)\frac{d G(z,\infty)}{dz}+\nu G(z, \infty)=0,
\end{align}
which readily leads to
\begin{align}
\frac{d \ln G(z,\infty)}{dz}=-\frac{\nu}{\lambda z-\mu}.
\end{align}
Integrating the above and using the boundary condition $G(1,t)=1$ for any $t$, we find
\begin{align}
G(z, \infty)=\left(\frac{1-\frac{\lambda}{\mu}}{1-\frac{\lambda}{\mu} z}\right)^r,
~~\mbox{where}~~r=\frac{\nu}{\lambda}. \label{PGF-steady}
\end{align}
This PGF reminds us of the \textbf{\emph{negative binomial distribution}} (NBD).  This distribution was originally introduced to express the probability of the \emph{number of failures $n$} needed to achieve $r$ successes in a sequnce of Bernoulli trials, when the probability of failure  is $q$. 

\begin{definition}[Negative binomial distribution (NBD)] \label{def-negative-binomial}
Negative binomial distribution NB$(r,q)$ is defined by
\begin{align}
P^{\scriptscriptstyle NB}_n={n+r-1\choose n} (1-q)^r q^n                                                                                                                                                                                                                                                                                                                                                                                                                                                                                                                                                                                                                                                                                                                                                                                                                                                                                                                                                                                                                                                                                                                                                                                                                                                                                                                                                                                                                                                 =\frac{\Gamma(n+r)}{n!\Gamma(r)}(1-q)^r q^n                                                                                                                                                                                                                                                                                                                                                                                                                                                                                                                                                                                                                                                                                                                                                                                                                                                                                                                                                                                                                                                                                                                                                                                                                                                                                                                                                                                                                                                , ~~n=0, 1, 2, \cdots, \label{def-NBD}
\end{align}
where the parameter $r$ is a positive real number. $\Box$
\end{definition}
When $r$ is a positive integer, and $0<q<1$ the above reduces to the classical definition of the (shifted)\footnote{Some authors define the negative binomial distribution as the distribution of the \emph{number of trials}, instead of the \emph{number of failures}, needed to achieve $r$ successes. Under this definition, $n=r, r+1, r+2, \cdots$. The probability distribution (\ref{def-NBD}) is then referred to as the \textbf{\emph{shifted negative binomial distribution}} (see e.g.,\cite{kobayashi-mark-turin:2012}, pp. 59-62).}  negative binomial distribution, sometimes called the Pascal distribution,  associated with Bernoulli trials.
The Gamma function $\Gamma(x)$ is defined for a positive real number $x$ by\footnote{This definition can be extended for a complex number $z$, with $\Re(z)>0$.} 
\begin{align}
\Gamma(x)=\int_0^\infty y^{x-1}e^{-y}\,dy,~~x>0.
\end{align}
The \emph{probability generating function} of NB$(r, q)$ of (\ref{def-NBD}) is given by
\begin{align}
G(z)&=\sum_{n=0}^\infty {n+r-1\choose n} (1-q)^r q^n                                                                                                                                                                                                                                                                                                                                                                                                                                                                                                                                                                                                                                                                                                                                                                                                                                                                                                                                                                                                                                                                                                                                                                                                                                                                                                                                                                                                                                                 z^n=\frac{1-q}{1-qz}, ~~|z|<q^{-1}.   \label{PGF-NBD}                                            
\end{align}  
The mean and variance of a RV (random variable) $X$ possessing this distribution can be readily found:
\begin{align}
\Ex[X]=\frac{qr}{1-q},~~\mbox{and}~~\mbox{Var}[X]=\frac{qr}{(1-q)^2}.  \label{mean-var-NBD}
\end{align}    
From (\ref{PGF-steady}) and the above formula,  we readily find the steady state distribution of $I(t)$ for $a<0$ is given as follows \cite{kelly:1979},p.14.
\begin{equation}
\fbox{
\begin{minipage}{7.5cm}
\[
\pi_n=\lim_{t\to\infty}P_n(t)={n+r -1\choose n}\left(\frac{\lambda}{\mu}\right)^n\left(1-\frac{\lambda}{\mu}\right)^r.
\]
\end{minipage}
} \label{steady-state-pi}
\end{equation}

\section{Time-Dependent Probability Distribution of the Infected Process $I(t)$} 
The partial difference equation (PDE) (\ref{PDE-for-PGF}) for the PGF $G(z,t)$ is  a linear PDE and is sometimes referred to as \textbf{\emph{planar differential equation}} (see \cite{gross-harris:1985}, \cite{kobayashi-mark:2008}, pp. 600-605). This type of PDE can be solved by using \textbf{\emph{Lagrange's method}} with its \emph{auxiliary differential equations}, which is discussed in Appendix A. \footnote{It will be worth noting that Ren and Kobayashi \cite{ren-kobayashi:1995}, \cite{kobayashi-ren:1992} discuss this type of PDE in the analysis of multiple on-off sources in traffic characterization of a data network.}

\subsection{When the system is initially empty, i.e., $I(0)=0$} \label{subsec-BDI-is-BN-distributed}

When the system is initially empty, i.e., $I(0)=0$, its PGF can  be found by solving the PDE given by (see (\ref{PGF-empty}) of Appendix A)
\begin{align}
G(z,t)&=\left(\frac{a}{\lambda e^{a t}-\mu -\lambda(e^{a t}-\mu)z}
\right)^r, ~~\mbox{where}~~a=\lambda-\mu,~~\mbox{and}~~r=\frac{\nu}{\lambda}.  \label{PGF-empty-1}
\end{align}
If we define a function $\beta(t)$ 
\begin{align}
\beta(t)=\frac{\lambda(e^{a t}-1)}{\lambda e^{a t}-\mu}, \label{def-beta(t)}
\end{align}
we can write (\ref{PGF-empty-1})  compactly as
\begin{equation}
 \fbox{
\begin{minipage}{5.5cm}
\[
G(z,t)=\left(\frac{1-\beta(t)}{1-\beta(t)z}\right)^r,    
\] 
\end{minipage}
}\label{PGF-solution}
\end{equation}
which is the PGF of the generalized negative binomial distribution NB($r, \beta(t)$), defined in Definition \ref{def-negative-binomial}.  
Thus, we find that the PMF of the BDI process $I(t)$ is given by
\begin{align}
P_n(t)=\mbox{Pr}[I(t)=n]={n+r-1\choose n}(1-\beta(t))^r \beta(t)^n,~~n=0, 1, 2, \cdots. \label{time-dependent-sol}
\end{align}

When $a=0$, the above expression can be simplified.  Noting
\begin{align}
\lim_{a\to 0}\frac{e^{a t}-1}{a} = t,
\end{align}
we find
\begin{align}
\beta(t)=\frac{\mu t}{\mu t+1},~~\mbox{and}~~1-\beta(t)=\frac{1}{\mu t+1}, 
\end{align}
which lead to
\begin{align}
G(z,t)=\left( \frac{1}{1+\mu t-\mu t z}\right)^r, 
\end{align}
and
\begin{align}
P_n(t)={n+r-1\choose n}\left(\frac{1}{\mu t+1}\right)^r \left(\frac{\mu t}{\mu t+1}\right)^n.
\end{align}

The mean and variance can be computed from the distribution function obtained above.  But we already know the expression for the mean $\oI_{\scriptscriptstyle BDI}(t)$ from (\ref{mean-I(t)}), which can be also found from the formula of the NB($r,q$, i.e.,
\begin{align}
\Ex[I_{\scriptscriptstyle BDI}(t)]=\frac{r\beta(t)}{1-\beta(t)}=\frac{\nu}{\alpha}(e^{at}-1),
\label{Expectation-BDI}
\end{align}
which certainly agrees with (\ref{mean-I(t)}).  Similarly, the variance can be found from the formula of NB($r, q$) as
\begin{equation}
\fbox{
\begin{minipage}{8.5cm}
\[
   \sigma_{\scriptscriptstyle BDI}^2(t)=\frac{r\beta(t)}{(1-\beta(t))^2}=\frac{\nu(\lambda e^{at}-\mu)(e^{at}-1)}{a^2}.
   \]
\end{minipage}
} \label{sigma-square-BDI}
\end{equation}
The reason why this simple innocent-looking equation is put in a box is because this expression for \textbf{\emph{the variance is perhaps the most important feature}} behind the erratic behavior we observe in the COVID-19, or an infectious disease in general, as we analyze below and demonstrate by presenting simulation results in Part II.

\subsection{The process \emph{I(t)} with an arbitrary initial condition} \label{subsec-Arbitrary-initial}

Up to now, we have focused on the case when the system is initially empty. In this section, we wish to study a general case where $I(0)=I_0$ is an arbitrary non-negative integer.  The PGF solution for this general case is, in fact, given in (\ref{PGF-general-BDI}) of Appendix A.  In this section we will add to the PGF and PMF of this general case a subscript or superscript $\scriptscriptstyle BDI: I_0$ wherever we need to distinguish similar functions of other processes.
So we can write
\begin{align}
G_{\scriptscriptstyle BDI: I_0}(z,t) &= G_{\scriptscriptstyle BDI: I_0=0}(z,t) ~~
G_{\scriptscriptstyle BD:I_0}(z,t),\label{BDI-PGF-as-product}
\end{align}
where 
\begin{align} 
G_{\scriptscriptstyle BDI:I_0=0}(z,t)&=\left(\frac{a}{\lambda e^{a t} -\mu -\lambda(e^{a t}-1)z}\right)^r\label{BDI-empty}\\
G_{\scriptscriptstyle BD:I_0}(z,t)&=~\left(\frac{\mu e^{a t}-\mu +(\lambda- \mu e^{a t})z}{\lambda e^{a t}-\mu-\lambda (e^{a t}-1)z}\right)^{I_0},\label{BD-I_0}
\end{align}
where the first given by (\ref{BDI-empty}) is the same as (\ref{PGF-empty-1}) discussed in a preceding section, whereas the second one (\ref{BD-I_0}) is the PGF of \textbf{\emph{the birth-and-death}} process,  
obtained by setting $\nu=0$ in the PGF (\ref{PGF-general-BDI}).
Thus, we see that the BDI:$I_0$ process can be expressed as a sum of two statistically independent processes:
\begin{align}
I_{\scriptscriptstyle BDI:I_0}(t)=I_{\scriptscriptstyle BDI:I_0=0}(t)+I_{\scriptscriptstyle BD: I_0}(z,t).
\end{align}

It can be further shown that the the birth-and-death process can be decomposed into two processes: the \textbf{\emph{the pure-birth process}} and a generalized binomial distributed process, denoted $I_{\scriptscriptstyle GB:p(t)}(t)$, where $p(t)$ is given by (\ref{def-p(t)}). 
Thus, 
\begin{align}
I_{\scriptscriptstyle BD:I_0}(t)=I_{\scriptscriptstyle NB(I_0,\beta(t))}(t)+I_{\scriptscriptstyle GB:I_0}(t),
\end{align}
and the PGFs of these processes can be expressed as
\begin{align}
G_{\scriptscriptstyle BD:I_0}(z,t)&=G_{\scriptscriptstyle NB(I_0,\beta(t))}(z, t)~ G_{\scriptscriptstyle GB:I_0}(z, t)\label{PGF-BD-NB-PD}
\end{align}
where
\begin{align}
G_{\scriptscriptstyle NB(I_0,\beta(t))}(z,t)&=\left(\frac{1-b(t)}{1-b(t)z}\right)^{I_0}   \label{PGF-NB}  \\
G_{\scriptscriptstyle GB:p(t)}(z,t)&=\left(1-p(t) + p(z) z\right)^{I_0}  \label{PGF-PD}
\end{align}

\section{Important Properties of the  Negative Binomial Distribution (NBD)}

\subsection{Coefficient-of-variation of an NBD with small $r$}

Let us further examine properties of a random variable $X$ of NB($r, q$) defined by its PMF (\ref{def-NBD}) and PGF (\ref{PGF-NBD}).
Its mean and variance are given in (\ref{mean-var-NBD}). 
The mode of the this distribution is given by
\begin{align}
\mbox{mode}_{X_{\scriptscriptstyle NB}}=\left\{\begin{array}{lll}
 & \lfloor\frac{q(r-1)}{1-q}\rfloor,~~&\mbox{if}~~r>1\\
 &  0,~~&\mbox{if}~~ r\leq 1. 
 \end{array}\right.  \label{mode-NBD}
\end{align}

Recall that the parameter $r$ in our problem is the ratio of $\nu$ [person/unit-time] (the rate with which an infected person arrives in the community in question) to $\lambda$ [person/unit-time/person] (the rate with which an infectious person infects susceptible people. As long as tight security measures are enforced at the boundaries of the community,  
$\nu$ is kept small.  Thus, in our problem of practical interest, $r < 1$, or even $r\ll 1$.  

Figure \ref{fig:Pascal} shows the (shifted) negative binomial distribution (NBD) for $p=1-q=0.5$ and for various values of $r=0.5, 1 ,2,4, 8, 16, 32$.
\begin{figure}
\centering
\includegraphics[scale=0.5]{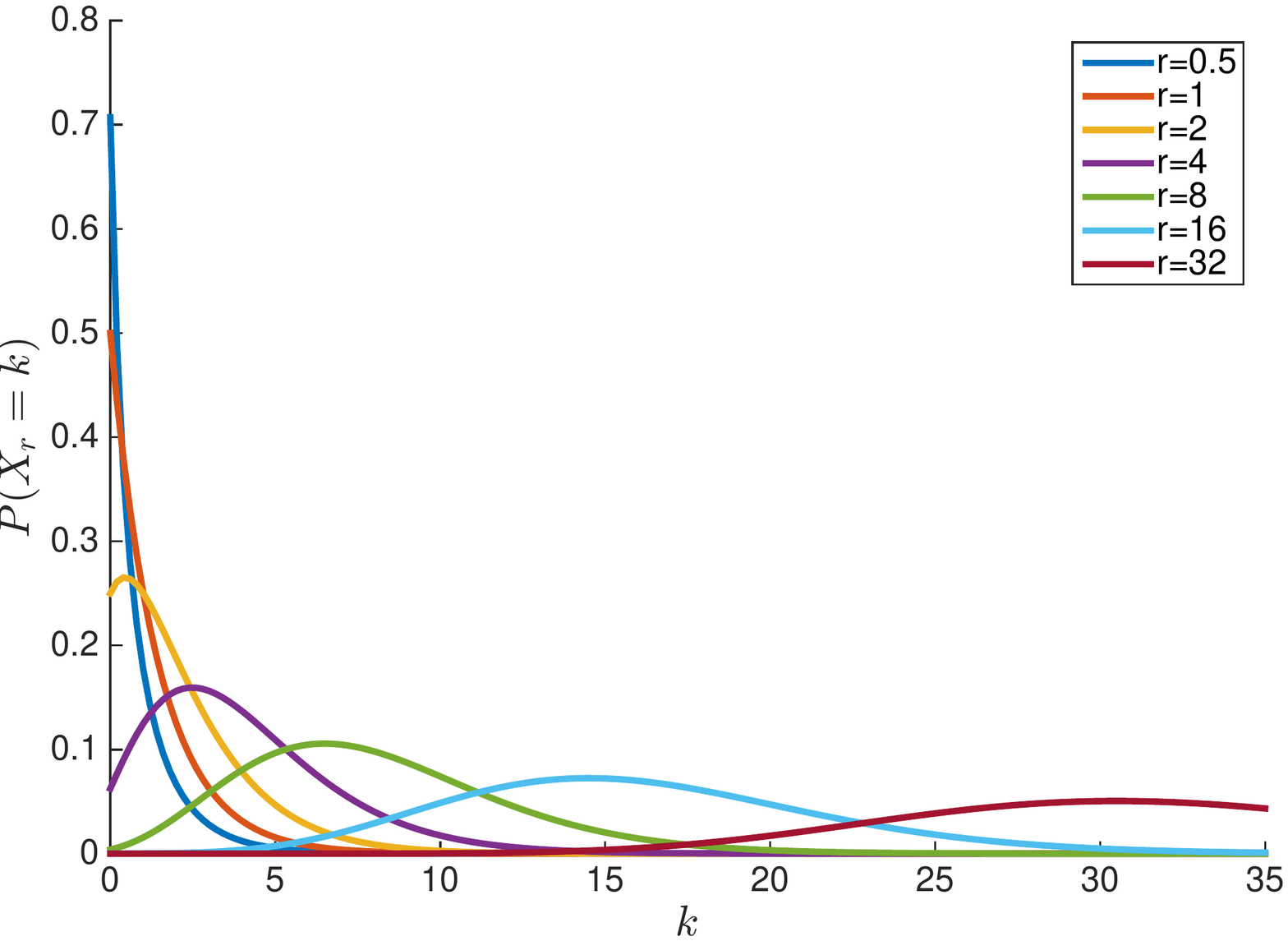}
\caption{\footnotesize The (shifted) negative binomial distribution (\ref{def-NBD}), i.e., the probability distribution of the number of failures needed to achieve $r$ successes in Bernoulli trials with $q(=1-p)=0.5$ and for $r=0.5, 1, 2, 4,8, 16, 32.$:
}
\label{fig:Pascal}
\end{figure}

As we can see in Figure \ref{fig:Pascal}, the PMF with $r=0.5$ falls off rapidly by around $n\approx 5$. For $q=0.5$ and $r=0.5$, we compute the first several values of $P_n$:
\begin{align}
P_0=0.7071,~~ P_1=0.5303,~~ P_2=0.3315,~~ P_3= 0.1934,~~
   P_4=0.1088,~~P_5=0.0598, \cdots, P_{10}=0.00256.
\end{align}

In order to see more clearly how the PMF $P_n$ looks for small $r$, let us rewrite (\ref{def-NBD}) as follows:
\begin{align}
P_n=k(r) \frac{(n-1+r)}{(n-1)}\frac{(n-2+r)}{(n-2)}\cdots \frac{(2+r)}{2}\frac{(1+r)}{1} \frac{q^n}{n}, ~~\mbox{where}~~k(r)=r p^r.
\end{align}
The leading term $k(r)$ does not depend on $n$.  So we can write
\begin{align}
P_n \propto \frac{(n-1+r)}{(n-1)}\frac{(n-2+r)}{(n-2)}\cdots \frac{(2+r)}{2}\frac{(1+r)}{1} \frac{q^n}{n}.
\end{align}
In the limit $r\to 0$, the shape of $P_n$ will become
\begin{align}
P_n\propto  \frac{q^n}{n}.  \label{P_n-for r=0}
\end{align}
Recall the following Maclaurin series expansion 
\begin{align}
-\ln(1-q)=q+\frac{q^2}{2}+\frac{q^3}{3}+\cdots,  \label{Maclaurin}
\end{align}
from which we obtain
\begin{align}
\sum_{n=1}^\infty\frac{-1}{\ln(1-q)}\frac{q^n}{n}=1.  \label{log-dist-identity}
\end{align}
Thus, we find (\ref{P_n-for r=0}) can be written as
\begin{align}
P_n&=\frac{1}{-\ln (1-q)}\frac{q^n}{n},~~n=1, 2, 3, \cdots.\label{Fisher}
\end{align}
This distribution is called the \textbf{\emph{logarithmic distribution}} (a.k.a. \textbf{\emph{logarithmic series distribution}}), which was introduced by R. A. Fisher in 1943 \cite{fisher:1943}. The PGF of (\ref{Fisher}) is given, using (\ref{Maclaurin}) once again, by
\begin{align}
G_{\scriptscriptstyle \log}(z)&=\frac{\ln(1-qz)}{\ln (1-q)}.
 \label{PGF-Fisher}
\end{align}

In our case the probability $q$ is given by $\beta(t)$ of (\ref{def-beta(t)}):
\begin{align}
q=\beta(t)=\frac{\lambda(e^{at}-1)}{\lambda e^{at}-\mu}, ~~\mbox{where}~~a=\lambda-\mu. \label{q=beta(t)}
\end{align}
Thus, for sufficiently large $t$, the parameter $q$ is very close to one, thus the term $q^n$ is a slowly decreasing geometric series.  For small $r$, the coefficient is also slowly decaying.  Thus, for small $r$ and $q$ close to 1, the NBD is a very slowly decaying distribution. having a long tail, similar to \textbf{\emph{Zipf's law}} or the \textbf{\emph{zeta distribution}} with small power exponent $\alpha$ (see e.g., \cite{kobayashi-mark-turin:2012}, pp. 62-64).
 
\noindent\\
{\small{\textbf{Example 2:}} \label{Example-NBD}
Consider the same set of model parameters used in Example 1, i.e., $\lambda=0.3, \mu=0.1$ and $\nu=0.2$.  Then
$r=\frac{\nu}{\lambda}=\frac{2}{3}$, and $a=\lambda-\mu=0.2$.  In Figure \ref{fig:NBD_t=1} through Figure \ref{fig:NBD_t=50}, we show the PMF $P_n(t)$ (\ref{time-dependent-sol}) of the infected population $I(t)$ at $t=1, 5, 10, 20, 30$ and $50$. In all these cases, $I_0=0$, i.e., there are no infected individuals at $t=0$.  

One useful method to see clearly the tail end of the distribution function is to plot the \textbf{\emph{log-survival function}}, $\log_{10}(1-F_n)$, where $F_n$ is the cumulative distribution function (CDF), i.e., $F_n=\sum_{i=0}^n P_n(t),$ and its complement $1-F_n$ is called the survival (or survivor) function.\footnote{The term ``survival (or survivor) function"  is often used in reliability theory. If a continuous variable $X$ represent the ``life" of a human, or a product (e.g., electric bulb), with its probability density function (PDF) $f_X(x)$ and cumulative distribution function (CDF) $F_X(x)=\int_0^t f_X(s)\,ds$ , the survival function is defined by $1-F_X(x)$.  By plotting $\log(1-F_X(x)$ vs. $x$, we find more clearly, how the remaining life will behave than a regular plot of $f_X(x)$ or $F_X(x)$.  See e.g., \cite{kobayashi-mark-turin:2012} pp. 144-146.}  In Figure \ref{fig:log_surv_t=10} and Figure \ref{fig:log_surv_t=30}, we show the log-survivor functions of the distribution of $I(t)$ at $t=10$ and $t=30$. We confirm our earlier observation that the NBD behaves similar to a geometric distribution, which is a straight line in the log-survivor plot.
\begin{figure}[th]
  \begin{minipage}[b]{0.45\linewidth}
  \centering
  \includegraphics[width=\textwidth]{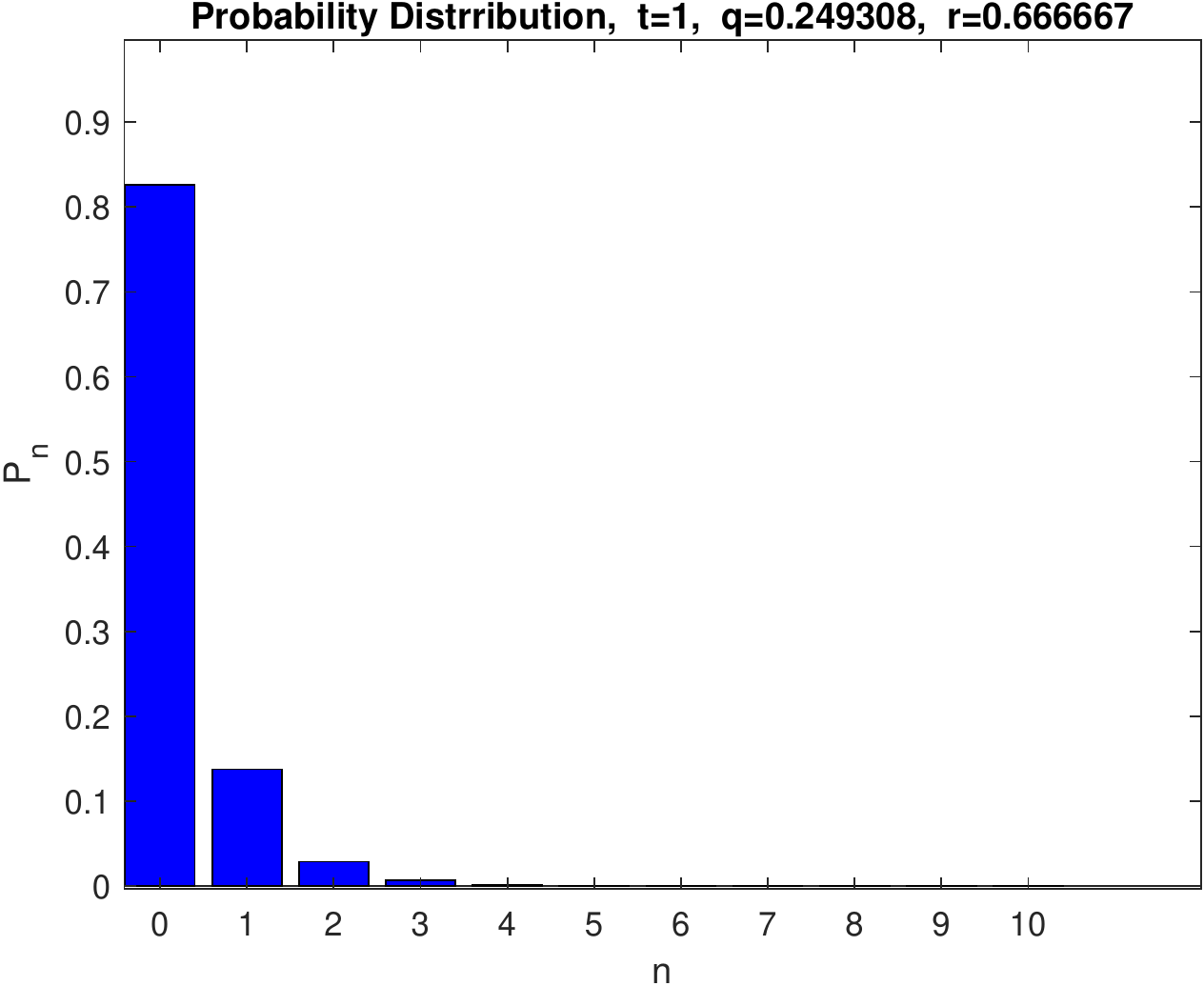}
  \caption{$P_n(t)$ at $t=1$.}
  \label{fig:NBD_t=1}
  \end{minipage}
  \hspace{0.5cm}
  \begin{minipage}[b]{0.45\linewidth}
  \centering
  \includegraphics[width=\textwidth]{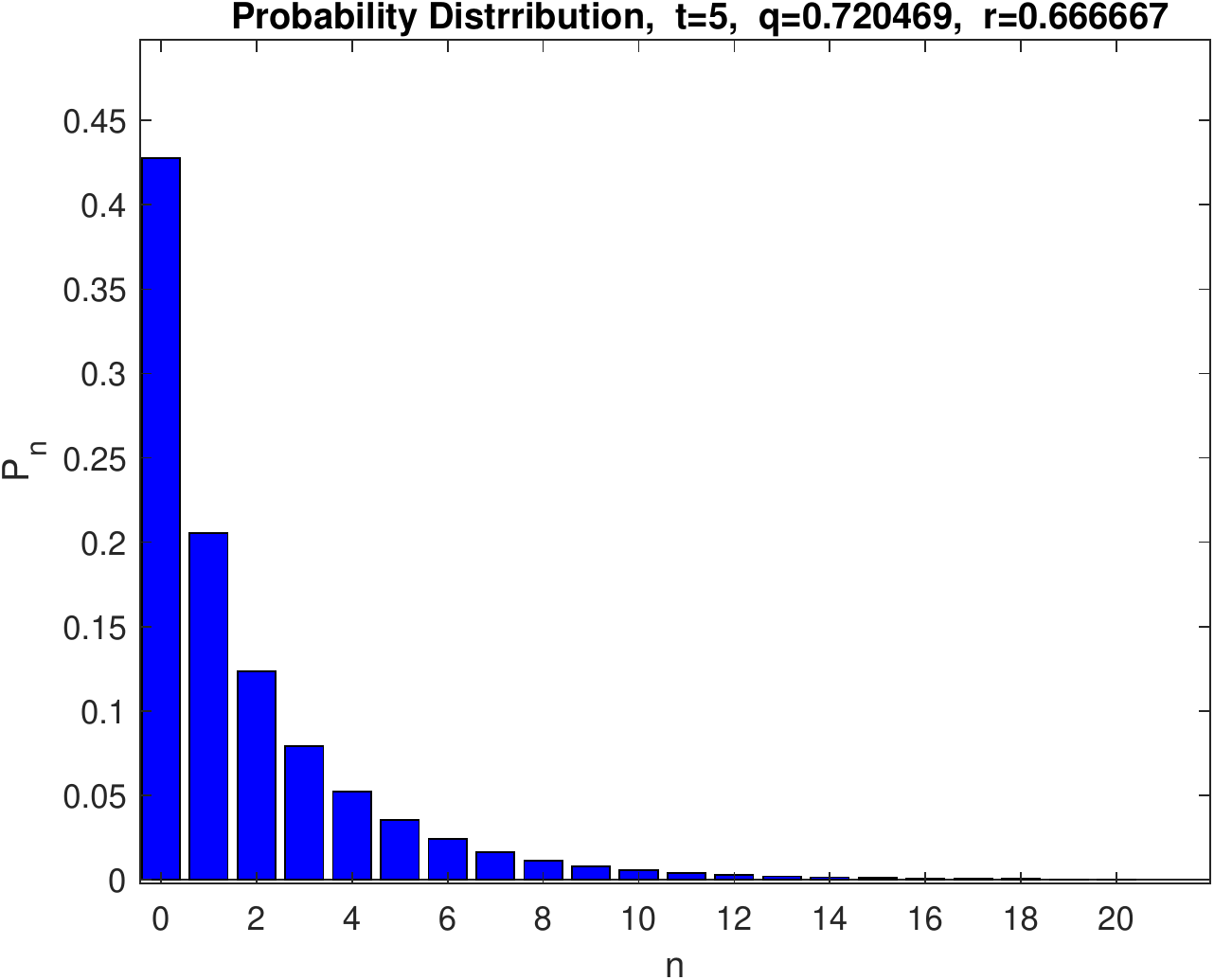}
  \caption{$P_n(t)$ at $t=5$.} 
  \label{fig:NBD_t=5}
  \end{minipage}
\end{figure}
\begin{figure}[hbt]
  \begin{minipage}{0.4\textwidth}
  \centering
  \includegraphics[scale=0.5]{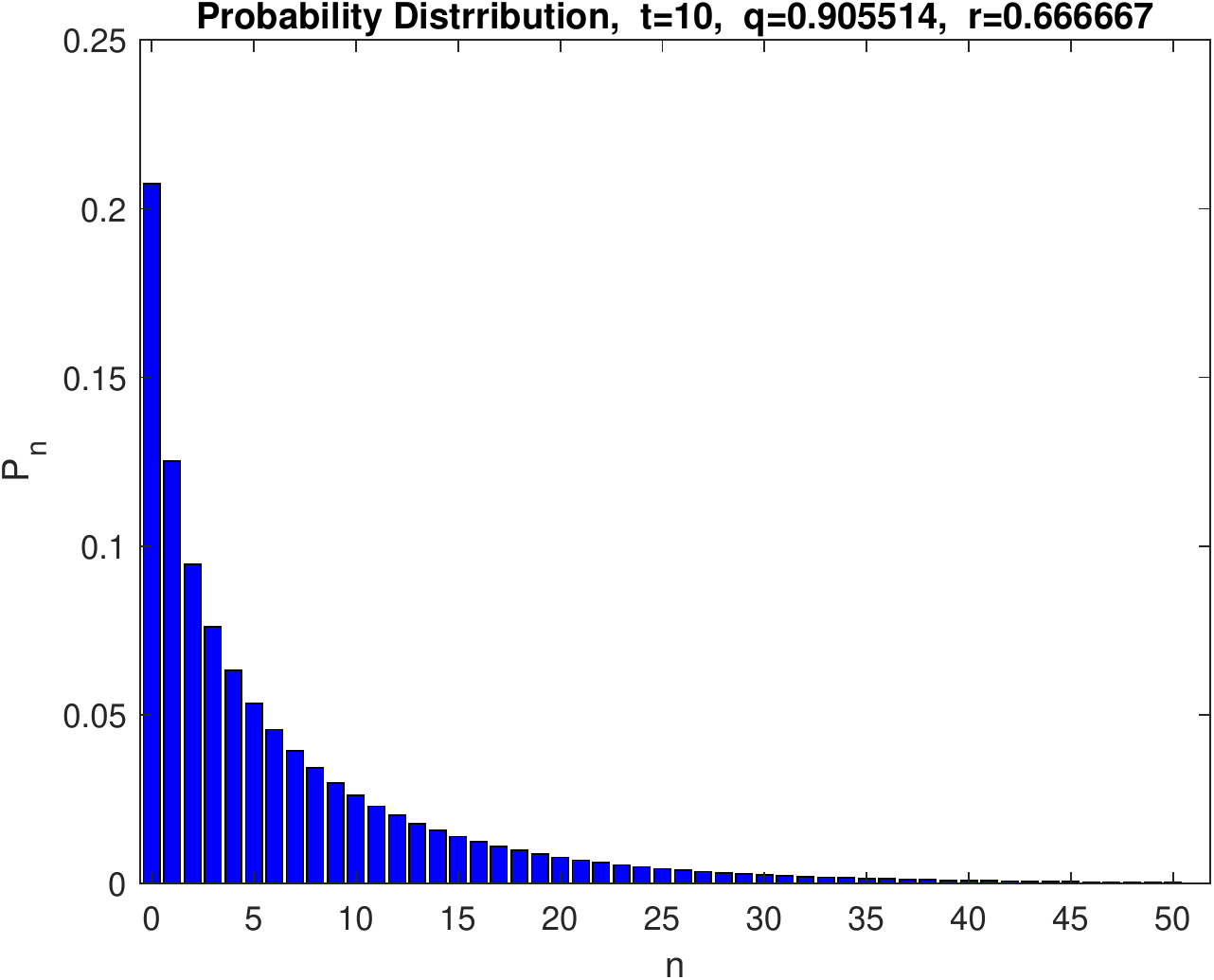}
  \caption{$P_n(t)$ at $t=10$.}\label{fig:NBD_t=10}
  \end{minipage}
  \begin{minipage}{0.4\textwidth}
  \centering
  \includegraphics[scale=0.5]{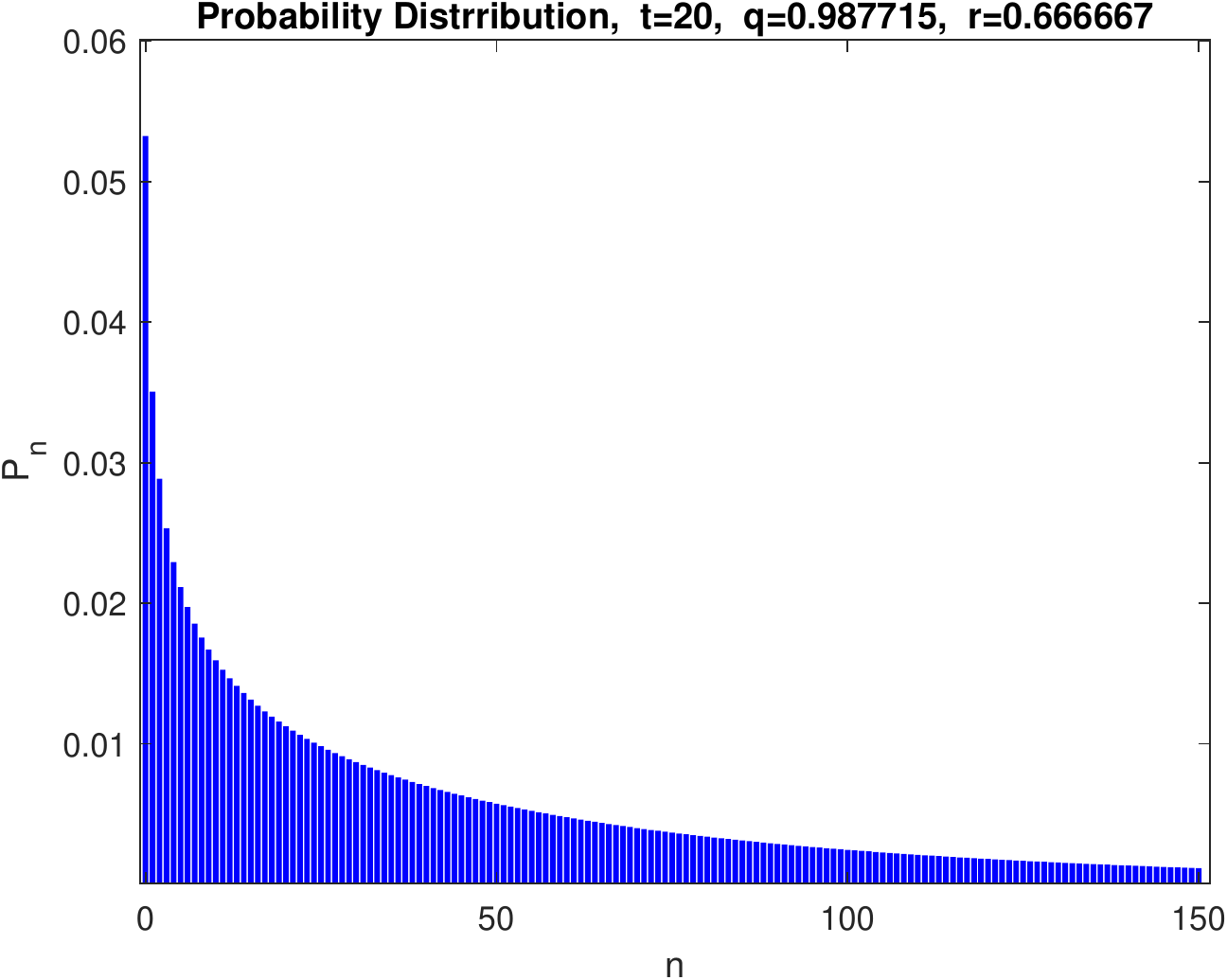}
  \caption{$P_n(t)$ at $t=20$.} \label{fig:NBD_t=20}
  \end{minipage}
\end{figure}
\begin{figure}[htb]
  \begin{minipage}{0.4\textwidth}
  \centering
  \includegraphics[scale=0.5]{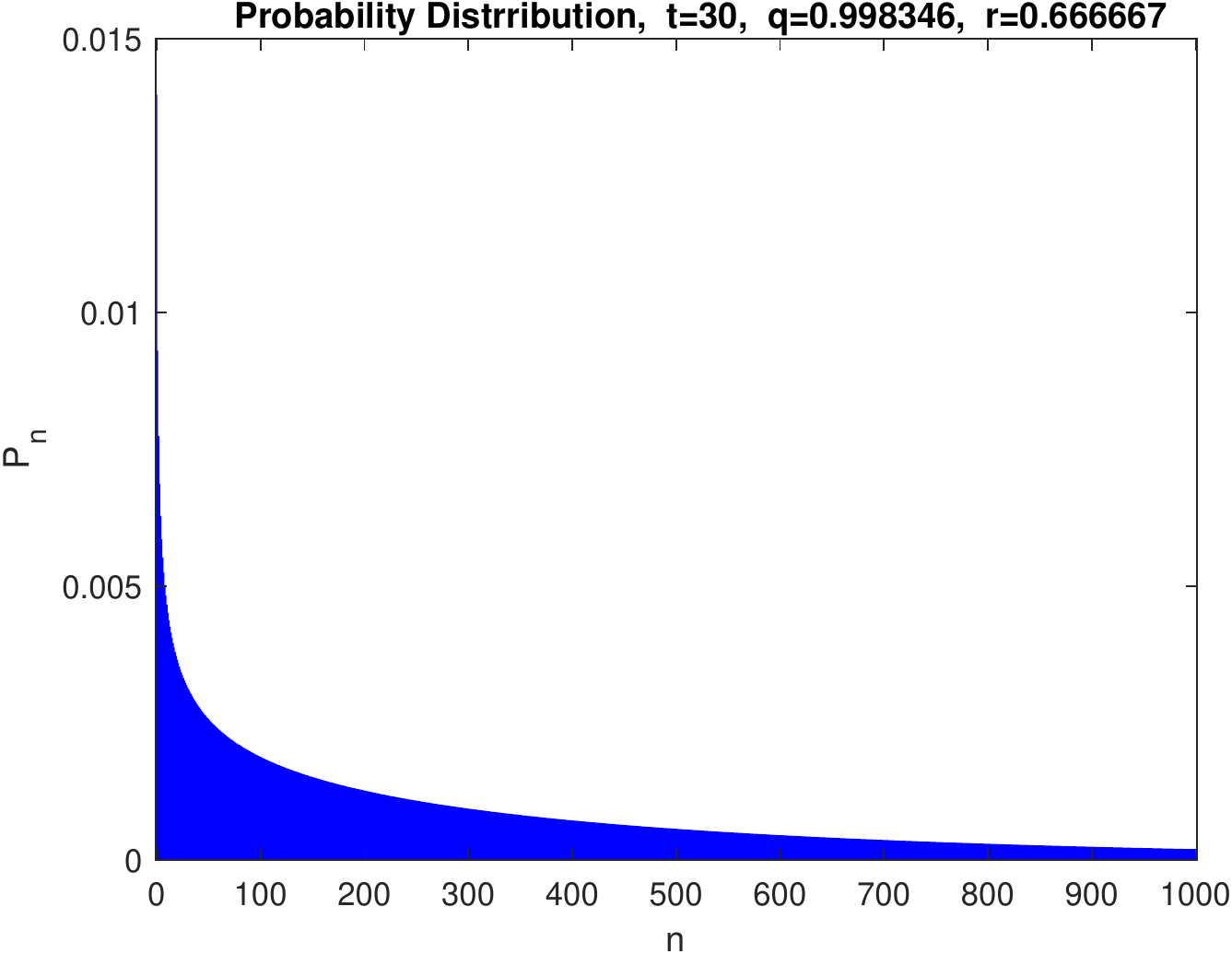}
  \caption{$P_n(t)$ at $t=30$.}\label{fig:NBD_t=30}
  \end{minipage}
  \begin{minipage}{0.4\textwidth}
  \centering
  \includegraphics[scale=0.5]{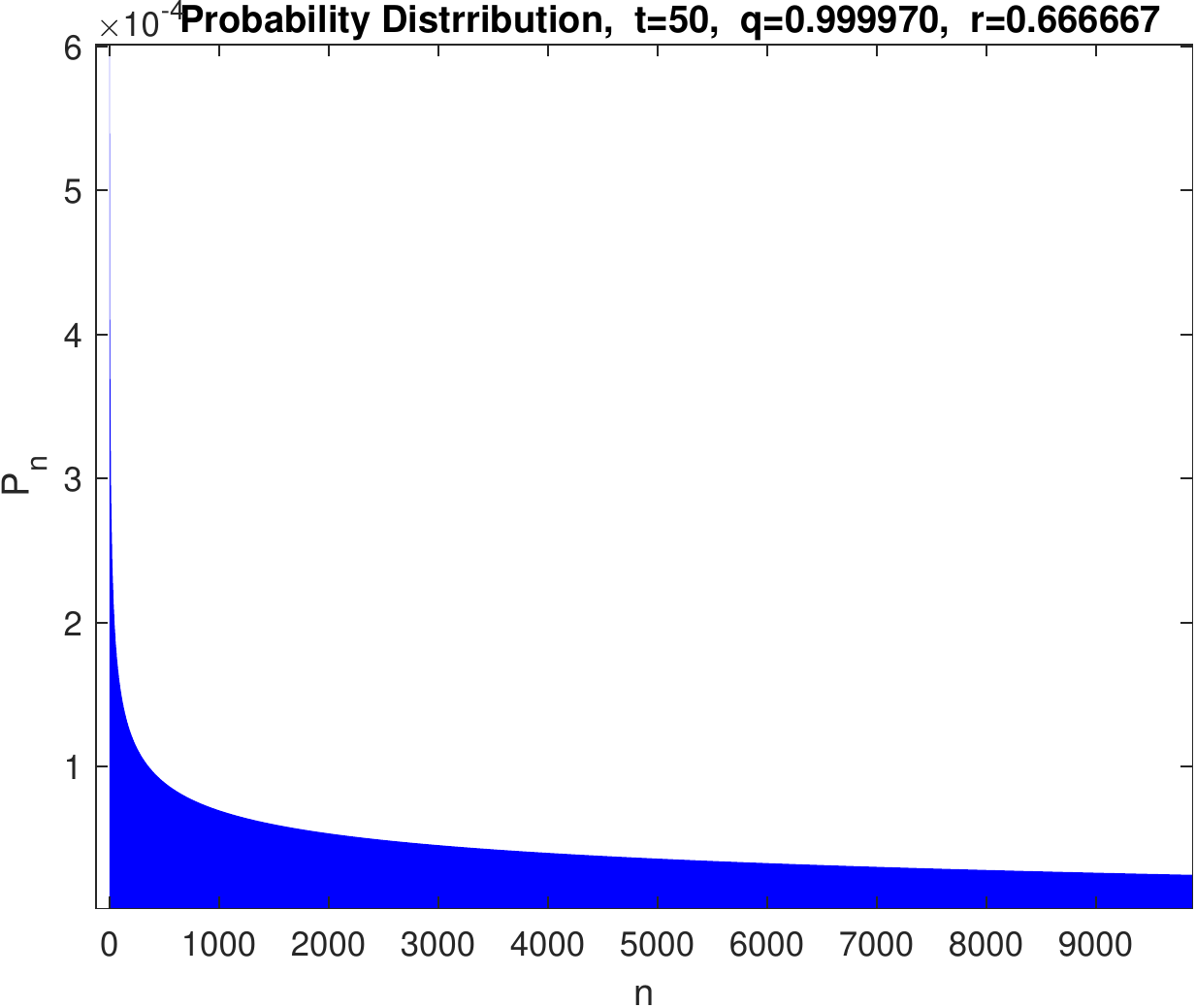}
  \caption{$P_n(t)$ at $t=50$.} \label{fig:NBD_t=50}
  \end{minipage}
\end{figure}
\begin{figure}[th]
  \begin{minipage}[b]{0.45\linewidth}
  \centering
  \includegraphics[width=\textwidth]{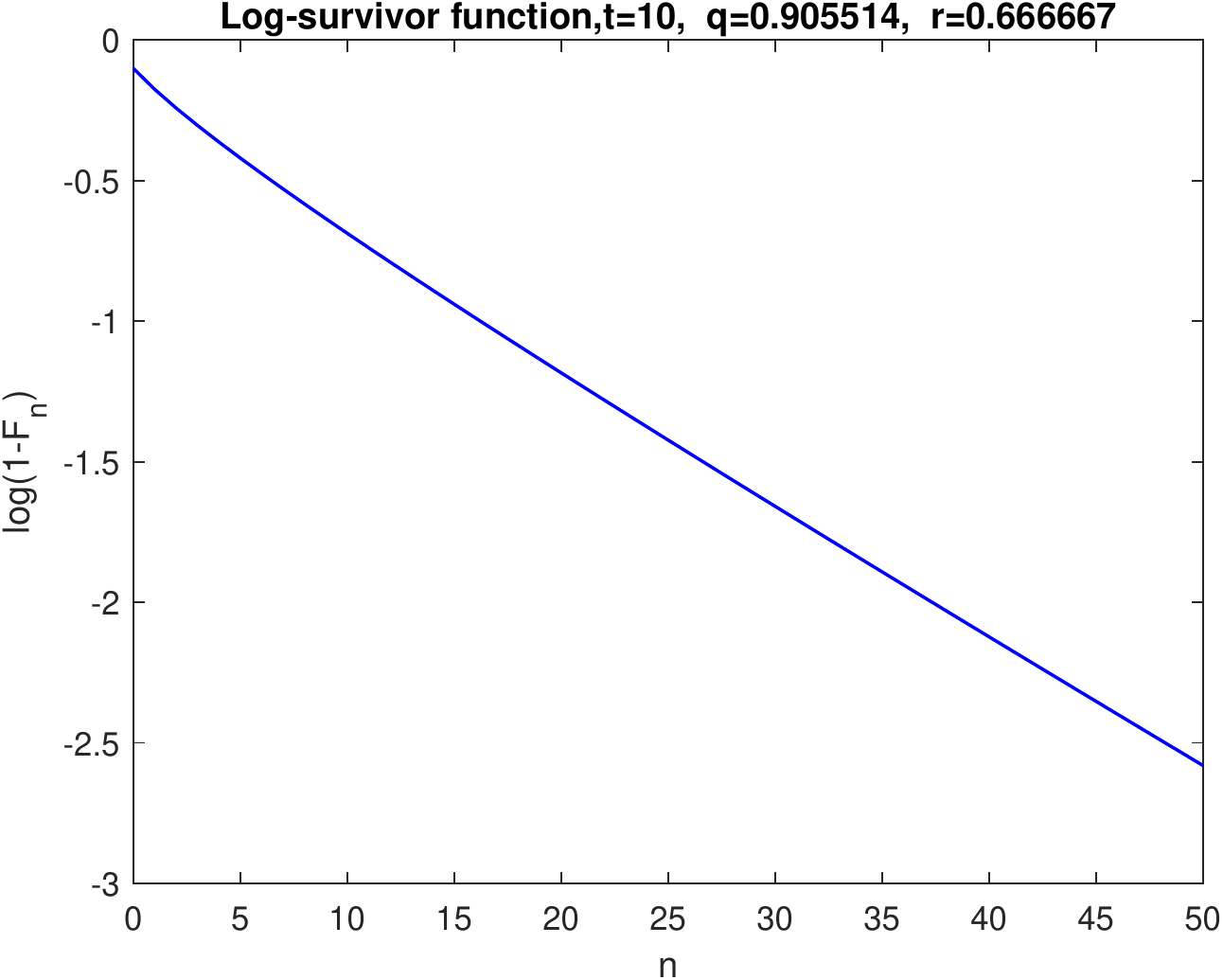}
  \caption{The log-survivor function plot of the distribution of $I(t)$ at $t=10$.}
  \label{fig:log_surv_t=10}
  \end{minipage}
  \hspace{0.5cm}
  \begin{minipage}[b]{0.45\linewidth}
  \centering
  \includegraphics[width=\textwidth]{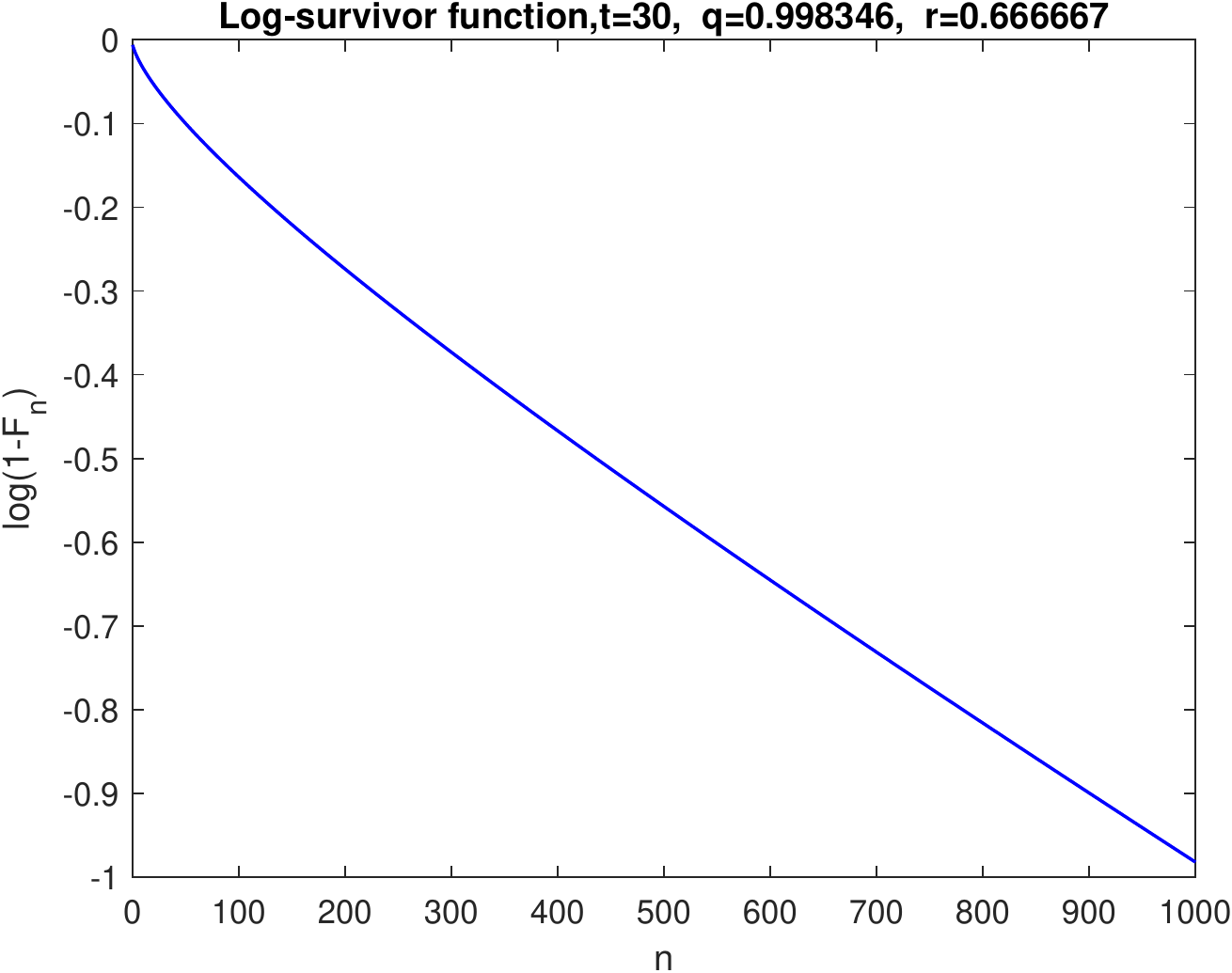}
  \caption{The log-survivor function plot of the distribution of $I(t)$ at $t=30$.} 
  \label{fig:log_surv_t=30}
  \end{minipage}
\end{figure}
%

As we observed in Example \ref{Example-NBD}, the PMF $P_n(t)$ looks similar to a slowly decaying geometric distribution for modest values of $t$, with a long tail, which gives a large variance. An often used measure of dispersion is the \textbf{\emph{coefficient of variation} (CV)}.  From (\ref{sigma-square-BDI}) we can derive the following expression: 
\begin{align}
\sigma^2_{\scriptscriptstyle BDI}(t)&=\oI^2(t)\frac{\lambda e^{at}-\mu}{\nu(e^{at}-1)}
=\frac{\oI^2(t)}{r\beta(t)},
\end{align}
from which we find the CV at time $t$ of the BDI process is given by
\begin{align}
c_{\scriptscriptstyle BDI}(t)=\frac{\sigma_{\scriptscriptstyle BDI}(t)}{\oI(t)}=\frac{1}{\sqrt{r\beta(t)}}.
\end{align}
It is evident that $\beta(t)$ quickly approaches unity from below, as $t$ increases\footnote{When $t$ is such  $at=7$, for instance, will make $\beta(t)$ within around 2\% off from the unity.  In our example of $a=0.2$[/day], $t=23$[days] will make $\beta(t)$ practically equal to unity.}
\begin{align}
\beta(t)\approx 1,~~\mbox{for}~~t\gg1,~~\mbox{and}~~\lim_{t\to\infty}\beta(t)=1.
\end{align}
Thus, we have arrive at the following proposition regarding the CV of the BDI process:
\begin{prop}
\textbf{\emph{The coefficient of variation}} of the BDI process rapidly converges to a constant $\sqrt{r^{-1}}$ for all $t$ such that $(\lambda-\mu)t\geq 7$:
\begin{equation}
\fbox{
\begin{minipage}{6cm}
\[ 
  \lim_{t\to \infty}c_{\scriptscriptstyle BDI}(t) =\sqrt{r^{-1}}, ~~\mbox{where}~~r=\frac{\nu}{\lambda}.
\]
\end{minipage}
} \label{limit-CV}
\end{equation}
\end{prop}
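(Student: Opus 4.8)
The plan is to derive the coefficient of variation directly from the two boxed formulas already established: the mean $\oI(t)$ in~(\ref{mean-I(t)}) (equivalently~(\ref{Expectation-BDI})) and the variance $\sigma^2_{\scriptscriptstyle BDI}(t)$ in~(\ref{sigma-square-BDI}), and then take the limit $t\to\infty$. Since $I(t)$ is NB$(r,\beta(t))$ distributed with $\beta(t)$ given by~(\ref{def-beta(t)}), the mean-variance relations~(\ref{mean-var-NBD}) give $\oI(t)=r\beta(t)/(1-\beta(t))$ and $\sigma^2_{\scriptscriptstyle BDI}(t)=r\beta(t)/(1-\beta(t))^2$, so that $\sigma^2_{\scriptscriptstyle BDI}(t)=\oI^2(t)/(r\beta(t))$ — exactly the intermediate identity stated just before the proposition. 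Dividing by $\oI^2(t)$ and taking square roots yields $c_{\scriptscriptstyle BDI}(t)=1/\sqrt{r\beta(t)}$, which is the formula already displayed in the text.

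The remaining work is to take the limit. I would show $\lim_{t\to\infty}\beta(t)=1$ when $a=\lambda-\mu>0$: writing $\beta(t)=\lambda(e^{at}-1)/(\lambda e^{at}-\mu)$ and dividing numerator and denominator by $e^{at}$ gives $\beta(t)=\lambda(1-e^{-at})/(\lambda-\mu e^{-at})\to \lambda/\lambda=1$ as $t\to\infty$. Hence $c_{\scriptscriptstyle BDI}(t)=1/\sqrt{r\beta(t)}\to 1/\sqrt{r}=\sqrt{r^{-1}}$, which is the boxed claim. For the quantitative ``$at\geq 7$'' assertion, I would estimate the gap $1-\beta(t)=(\lambda-\mu)/(\lambda e^{at}-\mu)=a/(\lambda e^{at}-\mu)\le a e^{-at}/(\lambda-\mu e^{-at})$, and note that at $at=7$ one has $e^{-at}=e^{-7}\approx 9\times10^{-4}$, so $\beta(t)$ is within roughly $1\%$–$2\%$ of unity (the precise constant depending mildly on $\lambda/\mu$), whence $r\beta(t)$ and therefore $c_{\scriptscriptstyle BDI}(t)$ have effectively reached their limiting values; this justifies the phrase ``rapidly converges.''

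There is essentially no obstacle here: the proposition is a direct corollary of the boxed variance formula~(\ref{sigma-square-BDI}) combined with the elementary limit $\beta(t)\to1$. The only point requiring a word of care is the case $a<0$, where $\beta(t)\to\beta(\infty)=\lambda/\mu<1$ rather than $1$; then $c_{\scriptscriptstyle BDI}(\infty)=1/\sqrt{r\cdot(\lambda/\mu)}=\sqrt{\mu/\nu}$, which is consistent with the steady-state NBD in~(\ref{steady-state-pi}) and with~(\ref{mean-var-NBD}) applied to $q=\lambda/\mu$. Since the proposition as stated is phrased for the regime $(\lambda-\mu)t\ge 7$, i.e.\ implicitly $a>0$, I would simply remark that the $a>0$ computation above is the relevant one and that the $a<0$ case gives the alternative constant $\sqrt{\mu/\nu}$ by the same argument. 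The whole proof is three or four lines of algebra once~(\ref{sigma-square-BDI}) is in hand.
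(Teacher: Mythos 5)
Your proposal is correct and follows essentially the same route as the paper: both derive $c_{\scriptscriptstyle BDI}(t)=1/\sqrt{r\beta(t)}$ from the variance formula (\ref{sigma-square-BDI}) and then use $\beta(t)\to 1$ (with the $at\approx 7$ estimate for the ``rapid convergence'' claim) to conclude $\lim_{t\to\infty}c_{\scriptscriptstyle BDI}(t)=\sqrt{r^{-1}}$. Your added remark on the $a<0$ case, where $\beta(\infty)=\lambda/\mu$ gives the alternative constant $\sqrt{\mu/\nu}$, is a correct supplement not present in the paper.
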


\noindent\\
\small{\textbf{Example 3:} \label{Example-covariance}
 
Consider the environment considered in Example \ref{Example_1-deterministic}, i.e., $\nu=0.2$/day, $\lambda=0.3$/day/infectious person, $\mu=0.1$/day/infectious person. Thus, $R_0=\lambda/mu=3$,  $a=\lambda-\mu=0.2.$, and $r=\frac{\nu}{\lambda}=2/3$.  Hence, $\sqrt{r^{-1}}= 1.225$.

The mean, variance and the standard deviation of $I(t)$ on the days $t=30$ and 50 are:
\begin{equation}
\begin{array}{lllll}
&\oI(0)=0,~~~~~&\mbox{Var}[I(0)] =0, ~~~~~&\sigma_I(0)=0, ~~~&c_I(0)=\mbox{undefined}, \nonumber\\
&\oI(30)=402.5,~~~~~&\mbox{Var}[N(30)]=243,411.9,~~~~&\sigma_N(30)=493.4, ~~& c_N(30)=1.226,\nonumber\\
&\oI(50)=22,025.5,~~&\mbox{Var}[N(50)]=727,706,000.9,~~&\sigma_N(50)=26,976.0,~~&c_N(50)=1.225.\nonumber\
\end{array}  
\end{equation} 
So, we have confirmed that the coefficient of variation is bounded from below by $\sqrt{r^{-1}}$=1.225. $\Box$
}
\subsection{Relation between the NBD and ``generalized" binomial distributions}

In the previous section we gave a formal definition of  the \emph{generalized} negative binomial distribution, by allowing the parameter $r$ to be a positive real number.  We now want to generalize the binomial distribution B($N, p$), by allowing $N$ to be a real number (positive or negative), and $p$ be any real number, positive or negative.
Thus, we cannot assign any probabilistic interpretation.

\begin{definition} \label{generalized-binomial}
Consider a generating function of $z$, which takes the form for real numbers, $p$ and $\alpha$.
\begin{align}
G(z)=(1-p + p~z)^\alpha.  \label{PGF-GB}
\end{align}
Then its inverse transform 
\begin{align}
f_n= {\alpha \choose n}p^n (1-p)^{\alpha -n},~~n=0, 1, 2, \ldots  \label{f_n}
\end{align}
is called a \textbf{\emph{generalized binomial distribution}}, denoted GB($n, p$).
\end{definition}

There is no guarantee that each $f_n$ non-negative, let alone $0\leq f_n\leq 1$, However, they add up to unity, because $G(1)=1$.  The main reason why we wish to generalize the binomial distribution is for the convenience of computing $P_n(t)$ for the BDI process and other birth-and-death processes (without immigration) discussed.

\begin{prop}[A generalized binomial distribution associated with a (generalized) negative binomial distribution]\label{prop-GB-GNB}

For a given (generalized) negative binomial distribution, whose PGF is given by
\begin{align}
G_{\scriptscriptstyle NB}(z)
=\left(\frac{1-q}{1-q z}\right)^r,  \label{PGF-GNB}
\end{align}
with the PMF
\begin{align}
P_n^{\scriptscriptstyle NB}={r+n-1\choose n} (1-q)^r q^n,~~n=0, 1, 2, \cdots,  \label{PMF-GNB}
\end{align}
its associated generalized binomial distribution is given by
\begin{align}
f_n={-r\choose n} p^n (1-p)^{-r-n}  \label{f_n-prop}
\end{align}
where $p$ and $q$ satisfy the relation
\begin{align}
(1-p)(1-q)=1, ~~or~~p=-\frac{q}{1-q}.  \label{p-and-q}
\end{align}
\end{prop}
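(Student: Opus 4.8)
The plan is to notice that Proposition~\ref{prop-GB-GNB} asserts nothing more than that a single power series admits two equivalent shapes: the ``negative binomial'' shape $\left(\frac{1-q}{1-qz}\right)^{r}$ of (\ref{PGF-GNB}) and the ``generalized binomial'' shape $(1-p+pz)^{\alpha}$ of (\ref{PGF-GB}). So the first step is purely the parameter matching. Assume the relation (\ref{p-and-q}), i.e. $(1-p)(1-q)=1$, equivalently $p=-\frac{q}{1-q}$; then $1-p=\frac{1}{1-q}$ and therefore
\[
1-p+pz=\frac{1}{1-q}-\frac{q}{1-q}\,z=\frac{1-qz}{1-q}.
\]
Raising both sides to the power $-r$ gives $(1-p+pz)^{-r}=\left(\frac{1-qz}{1-q}\right)^{-r}=\left(\frac{1-q}{1-qz}\right)^{r}=G_{\scriptscriptstyle NB}(z)$. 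Hence the generalized binomial generating function (\ref{PGF-GB}) with exponent $\alpha=-r$ and this value of $p$ is literally the negative binomial PGF.

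The second step is to read off coefficients. By Definition~\ref{generalized-binomial}, the inverse transform of $(1-p+pz)^{\alpha}$ is $\binom{\alpha}{n}p^{n}(1-p)^{\alpha-n}$ (see (\ref{f_n})); specializing $\alpha=-r$ gives exactly $f_{n}=\binom{-r}{n}p^{n}(1-p)^{-r-n}$, which is (\ref{f_n-prop}). Because $G_{\scriptscriptstyle NB}(z)$ is analytic in a neighborhood of $z=0$, its Taylor coefficients are uniquely determined, so this identification is legitimate as an equality of power series near the origin; the fact that $p<0$ whenever $0<q<1$ is precisely why the $f_n$ need not form a genuine probability distribution, in line with the remark preceding Definition~\ref{generalized-binomial}.

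As a consistency check I would then verify that these $f_n$ in fact reproduce the NBD PMF (\ref{PMF-GNB}). Using the elementary identity $\binom{-r}{n}=(-1)^{n}\binom{r+n-1}{n}$, together with $p^{n}=(-1)^{n}q^{n}(1-q)^{-n}$ and $(1-p)^{-r-n}=(1-q)^{r+n}$, the two factors $(-1)^{n}$ cancel and the powers of $1-q$ collapse to $(1-q)^{r}$, leaving $f_{n}=\binom{r+n-1}{n}(1-q)^{r}q^{n}=P_{n}^{\scriptscriptstyle NB}$.

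I do not expect a genuine obstacle here: the whole argument is a substitution in the PGF followed by bookkeeping with the generalized binomial coefficient. The only points needing care are the sign of $p$ and keeping the exponents $-r$ and $r+n$ straight in the powers of $1-q$, and remembering that in this context ``distribution'' is meant only in the formal sense of Definition~\ref{generalized-binomial}.
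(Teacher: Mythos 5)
Your argument is correct and follows the same two routes the paper itself gives: your main derivation (rewriting $1-p+pz$ as $\frac{1-qz}{1-q}$ and invoking Newton's generalized binomial formula with exponent $-r$) is exactly the paper's ``alternative'' proof, and your consistency check via $\binom{-r}{n}=(-1)^n\binom{r+n-1}{n}$ and $p^n(1-p)^{-r-n}=(-1)^nq^n(1-q)^r$ is the paper's primary proof. Nothing is missing; you have simply presented the paper's two arguments in the opposite order.
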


\begin{proof}
First we compare the generalized binomial coefficients
\begin{align}
{-r\choose n}&=\frac{(-r)(-r-1)\cdots (-r-n+2)(-r-n+1)}{n!}\nonumber\\
&=\frac{(r+n-1)(r+n-2)\cdots (r+1)r(-1)^n}{n!}
=(-1)^n{r+n-1\choose n}\label{ninomial-coeff-formula}
\end{align}
We also find 
\begin{align}
p^n(1-p)^{-r-n}=\left(\frac{-q}{1-q}\right)^n (1-q)^{r+n}=(-1)^n q^n (1-q)^r \label{equivalence}
\end{align}
Thus, we have shown the equivalence between (\ref{PMF-GNB}) and (\ref{f_n-prop}).

An alternative way to prove this equivalence is to write the PGF (\ref{PGF-GNB}) using $p$. defined by (\ref{p-and-q}):
\begin{align}
\left(\frac{1-q}{1-q z}\right)^r=(1-p + p~z)^{-r}.  \label{q-and-p}
\end{align}
Recall \textbf{\emph{Newton's generalized binomial formula}}:
\begin{align}
(1+t)^a=1+{a\choose 1}t+{a\choose 2}t^2+\cdots =\sum_{j=0}^\infty{a\choose j}t^j.
\end{align}
Apply this formula to the RHS of (\ref{q-and-p}), by setting $a=-r$ and 
$t=\frac{pz}{1-p}$. The coefficient of the $z^n$ term is given by $f_n$ of (\ref{f_n})
$\Box$
\end{proof}

Now let us revisit the PGF (\ref{PGF-empty-1}), which we can write as
\begin{align}
G(z,t)=(1-p(t)+p(t)~z)^{-r},  \label{PFG-p(t)}
\end{align}
where
\begin{align}
p(t)=-\frac{\lambda e^{at}-\mu}{a}.  \label{def-p(t)}
\end{align}
By referring to Proposition \ref{prop-GB-GNB} we find that $p(t)$ and $\beta(t)$ are related by 
\begin{align}
p(t)=-\frac{\beta(t)}{1-\beta(t)}.
\end{align}
Thus, by applying Newton's formula, we can obtain the following expression for $P_n(t)$:
\begin{align}
P_n(t)={-r\choose n}(1-p(t))^n p(t)^{-r-n},~~n=0, 1, 2, \cdots.  \label{P_n-GB}
\end{align}
Using the binomial coefficient formula and the equivalence (\ref{equivalence}), i.e.,
\begin{align}
p^n(1-p)^{-r-n}=(-1)^n \beta^n(1-\beta)^r,
\end{align}
we see that $P_n(t)$ given by (\ref{P_n-GB}) is equivalent to (\ref{time-dependent-sol}).

\subsection{The NBD as a Compound Poisson Distribution} \label{subsec-Compound Poisson}
In this section we will investigate how to represent the BDI process $I_{\scriptscriptstyle BDI}(t)$ as a compound Poisson process. The negative binomial distribution is \textbf{\emph{infinitely divisible}} (see e.g., Feller \cite{feller:1968}), and hence can be represented as a compound Poisson process.   First we show this representation by considering a special case of the BDI process where no death occurs.

\subsubsection{Birth-and-immigration (BI) process}\label{subsec-BI}

Let us consider a process which has not been discussed, to the best of our knowledge, in the literature. That is, the pure birth process, accompanied by immigration.  Let us call this process a \textbf{\emph{birth-and-immigration process}}, a BI process for short. 

The PGF of this process can be readily found by setting $\mu=0$ in (\ref{PGF-general-BDI}), and after some manipulation, we obtain 
\begin{align}
G_{\scriptscriptstyle BI}(z,t)
&=\left(\frac{1}{e^{\lambda t}-(e^{\lambda t}-1)z}\right)^{r}
\left(\frac{z}{e^{\lambda t}-(e^{\lambda t}-1)z}\right)^{I_0}
=z^{I_0}\left(\frac{1-\beta_0(t)}{1-\beta_0(t) z}\right)^{r+I_0},
 \label{PGF-general-BI}
\end{align}
where 
\begin{align}
\beta_0(t)=1-e^{-\lambda t}. \label{def-beta_0}
\end{align}
Not surprisingly, this $\beta_0(t)$ can be obtained by setting $\mu=0$ in $\beta(t)$ of (\ref{def-beta(t)}). Thus, the BI process $I_{\scriptscriptstyle BI}(t)$ is distributed according to the negative binomial distribution NB($r+I_0, \beta_0(t)$) shifted by $I_0$. Hence, we find the PMF at time $t$ as given below:
\begin{align}
P_n^{\scriptscriptstyle BI}(t)=e^{-(r+I_0)\lambda t}{n-I_0+r-1\choose n-I_0}
\left(1-e^{-\lambda t}\right)^{n-I_0},~~n=I_0, I_0+1, I_0+2, \cdots.
\end{align}
If, $I_0=0$, the PMF reduces to NB($r, \beta_0(t)$).  By noting $r\lambda=\nu$, we find the first few PMF values as follows:
\begin{align}
P_0^{\scriptscriptstyle BI:0}(t)&= e^{-\nu t}, \nonumber\\
P_1^{\scriptscriptstyle BI:0}(t)&= e^{-\nu t}r                     \left(1-e^{-\lambda t}\right),  \nonumber\\
P_2^{\scriptscriptstyle BI:0}(t)&= e^{-\nu t}r\frac{(r+1)}{2!}     \left(1-e^{-\lambda t}\right)^2, \nonumber\\
P_3^{\scriptscriptstyle BI:0}(t)&= e^{-\nu t}r\frac{(r+2)(r+1)}{3!}\left(1-e^{-\lambda t}\right)^3, \nonumber
\end{align}
where $P_0^{\scriptscriptstyle BI:0}(t)$ represents the \textbf{\emph{idle period}} of the system which ends upon the Poisson arrival.  This initial idle period is the only time when the system is empty.  Since there is no death or departure of any kind, the population monotonically increases as the time elapses.

As we defined in Definition \ref{def-A(t)-etc} of Section \ref{subsec-Expectation}, $A(t)$ is the cumulative count of  infected and infectious persons, arriving from the outside up to time $t$.  Let us count them as $c=1, 2, \cdots, A(t)$.  Consider a ``countable attribute" associated with each arrival and denote it $N_c$. Assume that the $N_c$'s are independent identically distributed (i.i.d.) RVs.  Kobayashi and Konheim \cite{kobayashi-konheim:1975} discuss use of a\emph{compound process} in the context of data communication system, where each arrival of a data packet carries $N_c$ [data units], e.g., [bytes]. Then, the sum 
\begin{align}
S(t)=\sum_{c=1}^{A(t)} N_c  \label{sum}
\end{align}
should be a quantity of interest in determining, e.g., the buffer size of a statistical multiplexer needed to keep the probability of buffer overflow below some prescribed level.
 
In our problem at hand, each infected arrival from the outside becomes an ``ancestor" who produces many ``descendants" over multiple generations, who are all ``infected" persons and constitute the members of the ``$c$ clan."  Such a process is called a \textbf{\emph{branching process}}.  In our model, each infected individual, whether an ancestor or a descendant, produces a new descendant at rate $\lambda$ [person/unit-time].

Let $A_j(t)$ be the number of arrivals in $(0, t]$ and having $j$ \emph{descendants} at time $t$.  The case $j=0$ does not exist, since there is no death, hence each clan includes at least its ancestor, such that
\begin{align}
\sum_{j=1}^\infty A_j(t)=A(t), ~~t\geq 0.
\end{align}
Branson \cite{branson:1991}, who ascribes the original idea to Karin and McGregor \cite{karlin-mcgregor:1967},
shows that the RVs $A_j(t)$'s are independent Poisson variables with mean $m_j(t)$ where
\begin{align}
m_j(t)&=E[A_j(t)]=\frac{\nu}{\lambda}\frac{\xi^j}{j}, \label{Branson}
\end{align}
where $\xi$ is a function of $\lambda$ and $t$.

The above formula (\ref{Branson}) leads to the idea that the probability distribution that each ancestor should produce $j$ descendants (i.e., secondary, tertiary infectees, etc.) whose probability distribution takes the form
\begin{align}
\mbox{Pr}[N_c(t)=j]=P^{\scriptscriptstyle N_c}_j(t)\propto \frac{\beta_0(t)^j}{j}, ~~j\geq 1. \label{conjecture}
\end{align}
This distribution is 
the {\emph{logarithmic distribution} of (\ref{Fisher}). The PGF of $N_c(t)$ is therefore given by
\begin{align}
G_{\scriptscriptstyle N_c}(z)&=\frac{\ln(1-\beta_0(t)~z)}{\ln (1-\beta_0(t))}
=\frac{\ln(1-(1-e^{-\lambda t}) z)}{-\lambda t}
 \label{PGF-N_c}
\end{align}

Note that $N_c(t)$ is the number of the family members at time $t$ of the $c$th clan, with the ancestor itself included. Since we are assuming no death, the sum of these numbers over all $A(t)$ ancestors should be the total population $I(t)$ at time $t$:
\begin{align}
I(t)=\sum_{c=1}^{A(t)} N_c(t).  
\end{align}
Then, the PGF of $I(t)$ can be expressed as
\begin{align}
G(z,t)&=\Ex\left[z^{I(t)}\right]=\Ex\left[z^{\sum_{c=1}^{A(t)}N_c(t)}\right]  
= \Ex\left[\left(\Ex[z^{N_c}]\right)^{A(t)}\right]
= \Ex\left[\left(G_{\scriptscriptstyle N_c}(z, t)\right)^{A(t)}\right]\nonumber\\
&= G_A\left(G_{\scriptscriptstyle N_c}(z)\right)
\end{align}

The PGF of the Poisson arrival process of rate $\nu$ is given by 
\begin{align}
G_{\scriptscriptstyle A}(z,t)=\exp(\nu t(z-1)). \label{PGF-Poisson}
\end{align}
Thus, we find the PGF of the BI process, a compound Poisson process,  is
\begin{align}
G(z,t)&=G_{\scriptscriptstyle A}(G_{\scriptscriptstyle N_c}(z))=\exp\left\{\nu t\left(
\frac{\ln(1-\beta_0(t))}{\ln(1-\beta_0(t)~ z)}-1\right)\right\}\nonumber\\
&=\left(\frac{1-\beta_0(t)}{1-\beta_0(t)~ z}\right)^r, ~~\mbox{where}~~r=\frac{\nu}{\lambda},
\label{PGF-as-compound}
\end{align}
which is the PGF (\ref{PGF-general-BI}), where $I_0=0$.  Thus, we have shown that the Poisson process with rate $\nu$ ends up with the negative binomial distributed process NB($r, \beta_0(t)$).  It is interesting that the infection process which forms a \textbf{\emph{branching process}} where an infectious person infects a susceptible person at rate $\lambda$ [persons/time unit] will result in the negative binomial distribution. The \emph{positive feedback loop} of creating its descendants exhibits the negative binomial distributed process.

\subsubsection{Birth-and-death with immigration (BDI) process as a compound Poisson process}

Now we extend our argument of the previous section to the BDI process with the initial condition $I_0=0$, whose PGF is given by (\ref{PGF-empty-1}).
\begin{align}
G(z,t)=\left(\frac{1-\beta(t)}{1-\beta(t)z}\right)^r,  \label{PGF-empty-2}
\end{align}
where $\beta(t)$ is defined by (\ref{def-beta(t)}):
\begin{align}
\beta(t)=\frac{\lambda(e^{a t}-1)}{\lambda e^{a t}-\mu}. \label{def-beta(t)-2}
\end{align}
The first few PMF values can be found from (\ref{time-dependent-sol}) as
\begin{align}
P_0^{\scriptscriptstyle BDI:0}(t)&=B(t),~~\mbox{where}~~B(t)=(1-\beta(t))^r=\left(\frac{a}{\lambda e^{at}-\mu}\right)^r,\nonumber\\
P_1^{\scriptscriptstyle BDI:0}(t)&=B(t)r\beta(t), \nonumber\\
P_2^{\scriptscriptstyle BDI:0}(t)&=B(t)r\frac{(r+1)}{2!}\beta(t)^2, \nonumber\\
P_3^{\scriptscriptstyle BDI:0}(t)&=B(t)r\frac{(r+2)(r+1)}{3!}\beta(t)^3. 
\end{align}
The function $\beta(t)$ is between 0 and 1, and approaches 1 as $t\to\infty$. If $\nu\to 0$, then $r\to 0$ as well and $B(t)\to 1$.  Since $B(t)r$ does not depend on $n$, we see again the \textbf{\emph{Fisher series}}:
\begin{align}
\mbox{Pr}[N_c(t)=j]=P^{\scriptscriptstyle N_c}_j(t)\propto \frac{\beta(t)^j}{j}, ~~j\geq 1. \label{conjecture-2}
\end{align}
Then by replacing $\beta_0(t)=1-e^{-\lambda t}$ by $\beta(t)=\frac{\lambda(e^{at}-1)}{\lambda e^{at}-\mu}$ the entire mathematical steps of the BI process case carry over to the case of the BDI process, except for the last step (\ref{PGF-as-compound}), where the parameter $r$ needs to be defined as
\begin{align}
r=\frac{-\nu t}{\ln\left(\frac{\lambda -\mu}{\lambda e^{at}-\mu}\right)}.
\end{align}
If $a=\lambda-\mu>0$, then for $t\gg 1$, the following approximation holds
\begin{align}
r\approx \frac{-\nu t}{-at}= \frac{\nu}{\lambda}\frac{\lambda}{\lambda-\mu}.
\end{align}
Therefore, it seems clear that we need to to choose a Poisson process other than the one with mean $\nu t$.  Let $\nu(t)$ be the mean of the Poisson process which is to be compounded with $N_c(t)$ whose PGF is
\begin{align}
G_{\scriptscriptstyle BDI:0}(t)=\left(\frac{1-\beta(t)}{1-\beta(t)~z}\right)^r.
\end{align}
Then it should be clear from the above discussion that $\nu(t)$ must be given by
\begin{align}
\exp(\nu(t))&=\left(\frac{\lambda e^{at}-\mu}{\lambda-\mu}\right)^r 
=\exp(art)\left(\frac{\lambda-\mu e^{-at}}{\lambda -\mu}\right)^r  \nonumber\\
&=\exp(\nu t)\left(\frac{\lambda-\mu e^{-at}}{\lambda -\mu}\right)^r.
\end{align} 
Thus,
\begin{align}
\nu(t)=\nu t +r\ln \left(\frac{\lambda-\mu e^{-at}}{\lambda -\mu}\right).
\end{align}
Thus, we have shown that the BDI process as a compound Poisson process with rate $\nu(t)$.

\section{Concluding Remarks}

At the beginning of Section \ref{sec-stochastic_model} we showed the results of twelve simulation runs of the BDI process to provoke the reader's interest.  Hopefully you have found an answer to the first question. How about the second question? If you are an avid reader, you should have figured out the answer by now. In Part II of this report under preparation \cite{kobayashi:2020b}, we will present more simulation results, which should help the reader find a definitive answer to the second question.  

In this report, we focused on the analysis of $I(t)$, the infected population. Although we presented the stochastic means of $B(t)$ (the population of the secondary infections) and $R(t)$ (the recovered/dead population), we defer a full analysis of these processes to Part II.

We will also show, both by analysis and simulation, how the infection process can be controlled by changing the values of $\lambda, \mu$ and $\nu$ at certain points in time.  Increasing the so-called \emph{social distance} would decrease the value of $\lambda$.  The value of $\mu$ can be increased by improving medical treatments to speed up the recovery process. Another possible option is to reduce the size of the susceptible population  significantly by exposing the young and healthy susceptible to the disease so that they  become immune to the disease. The quantitative model such as ours should help policy makers and health officials to make judicious choices of these options by assessing the effects and costs of various options available to them.  In Part II (or III), we will also discuss how we should estimate the model parameters from empirical data so that the forecast can be made as reliable as possible.

\appendix

\numberwithin{equation}{section}

\section{Derivation of PGF (\ref{PGF-solution})}\label{Appendix A}
The partial differential equation (PDE) (\ref{def-PGF}) can be solved for PGF $G(z,t)$ by Lagrange's method with auxiliary differential equations.  We write (\ref{def-PGF}) in the following form of a {\it planar differential equation} (see e.g., \cite{gross-harris:1985}, \cite{kobayashi-ren:1992}, \cite{ren-kobayashi:1995}, \cite{kobayashi-mark:2008}, pp. 600-603):
\begin{align}
p\frac{\partial G}{\partial t}+q\frac{\partial G}{\partial z}=r, \label{planar-dif-eq}
\end{align}
where $G$ denotes $G(z,t)$ defined above, and $p, q$ and $r$ are, in general,
functions of $t, z,$ and $G$. 

The solution $G=f(t, z)$ represents a surface in the $t-z-G$ space. The normal vector to the surface at any point $(t_0, z_0, G_0)$ is perpendicular to the line through this point, whose direction numbers are the values of $p, q$ and $r$ evaluated at this point, which we denote by $p_0, q_0$ and $G_0$. This line has the equations
\begin{align}
\frac{t-t_0}{p_0}=\frac{z-z_0}{q_0}=\frac{G-G_0}{r_0}.
\end{align}
Hence, at each point $(p, q, G)$ of the surface, there is a normal vector whose direction numbers, $dt, dz$ and $dG$, satisfy the following equation:
\begin{align}
\frac{dt}{p}=\frac{dz}{q}=\frac{dG}{r}.\label{normal_vector}
\end{align}

In our problem at hand, we set
\begin{align}
p&=1,\\
q&=-(\lambda z-\mu)(z-1),\\
r&=\nu(z-1)G,
\end{align}
Then (\ref{normal_vector}) becomes
\begin{align}
\frac{dt}{1}= -\frac{dz}{(\lambda z-\mu)(z-1)}=\frac{dG(z,t)}{\nu(z-1)G(z,t)}.\label{normal_vector-1}
\end{align}
which leads to the following two independent solutions:
\begin{align}
\left(\frac{z-1}{\mu-\lambda z}\right)e^{a t}=C_1, \mbox{where}~~a=\lambda-\mu, \label{C_1}
\end{align}
and
\begin{align}
G(z,t)(\mu-\lambda z)^{r}=C_2, ~~\mbox{where}~~r=\frac{\nu}{\lambda} \label{C_2}
\end{align}
where $C_1$ and $C_2$ are integration constants.
We write the functional relations between $C_1$ and $C_2$ as
\begin{align}
C_2=f(C_1), \label{C_2:C_1},
\end{align}
which, together with (\ref{normal_vector-1}), implies the following general solution:
\begin{align}
G(z,t)=(\mu-\lambda z)^{-r}~
f\left(\left(\frac{z-1}{\mu-\lambda z}\right)e^{\lambda t}\right). \label{G-f}
\end{align}
From the initial condition (\ref{initial}), we find
\begin{align}
G(z,0)=(\mu-\lambda z)^{-r} f\left(\frac{z-1}{\mu-\lambda z}\right)=z^{I_0}.
\end{align}
from which we find
\begin{align}
f\left(\frac{z-1}{\mu-\lambda z}\right)= z^{I_0}(\mu-\lambda z)^{r}.
\end{align}
We introduce a new variable $y$ by
\begin{align}
\frac{z-1}{\mu-\lambda z}=y,~~\mbox{i.e.}, ~~
z=\frac{1+\mu y}{1+\lambda y}.
\end{align} 
From the last two equations, we can determine the functional form of $f$:
\begin{align}
f(y)= \left(\frac{1+\mu y}{1+\lambda y}\right)^{I_0} 
\left(\frac{\mu-\lambda}{1+\lambda y}\right)^{r}.
\label{func-f}
\end{align}

By substituting the last equation into (\ref{G-f}), we obtain
\begin{align}
G(z,t)=(\mu -\lambda z)^{-r}
f\left(\left(\frac{z-1}{\mu-\lambda z}\right)e^{a t}\right)
\end{align}
By noting
\begin{align}
\frac{1+\mu y}{1+\lambda}=\frac{\mu-\lambda z + \mu(z-1) e^{a t}}{\mu -\lambda z +\lambda (z-1) e^{a t}}
\end{align}
and 
\begin{align}
\frac{\mu-\lambda}{1+\lambda y}&=\frac{(\mu-\lambda)(\mu-\lambda z)}{\mu-\lambda z+\lambda (z-1)e^{a t}}
\end{align}
we finally obtain the PGF of $I(t)$
\begin{align}
G(z,t)&=(\mu -\lambda z)^{-r} 
\left(\frac{(\mu-\lambda)(\mu-\lambda z)}{\mu-\lambda z+\lambda(z-1)e^{a t}}\right)^{r}
\left(\frac{\mu-\lambda z+\mu(z-1)e^{a t}}{\mu-\lambda z + \lambda (z-1) e^{a t}}\right)^{I_0}\nonumber\\
&=\left(\frac{\mu-\lambda}{\mu-\lambda z+\lambda(z-1)e^{a t}}\right)^{r}
\left(\frac{\mu-\lambda z+\mu(z-1)e^{a t}}{\mu-\lambda z + \lambda (z-1) e^{a t}}\right)^{I_0}\nonumber\\
 \label{PGF-general-BDI}
\end{align}
Note that the PGF is given as a product of two PGFs, the second being the PGF of the birth-and-death process without immigration. 

If $I_0=0$, i.e., if the system is initially empty, the above
PGF reduces to
\begin{align}
G(z,t)&=\left(\frac{a}{\lambda e^{a t}-\mu -\lambda(e^{a t}-1)~z}
\right)^r, ~~\mbox{when}~~I(0)=I_0=0.  \label{PGF-empty}
\end{align}

\section*{Acknowledgments}
\addcontentsline{toc}{section}{Acknowledgments}
I thank Prof. Brian L. Mark of George Mason University for his valuable suggestions and help during the course of this study. He has read this manuscript carefully and made numerous editorial suggestions. Were it not for his patient help, I would have taken a lot more time in debugging my MATLAB simulation programs. Prof. Hideaki Takagi of the University of Tsukuba kindly shared with me his unpublished lecture note on the birth-and-death processes \cite{takagi:2007}. My excitement of having obtained a closed form solution of the time-dependent PMF of the BDI process was short lived, when he showed me his lecture note and informed me of the existence of the book by Bailey \cite{bailey:1964}. Dr. Linda Zeger read the first draft of this report and gave me valuable suggestions to improve the presentation.

\bibliographystyle{ieeetr}
\bibliography{infections}

\end{document}